\RequirePackage{fix-cm}
\documentclass{arxive}

\usepackage[T1]{fontenc}
\usepackage[utf8]{inputenc}
\usepackage{amsmath,amssymb,amsthm,mathtools}
\usepackage{bm}
\usepackage{bbm}
\usepackage{braket}

\usepackage{graphicx}
\usepackage{booktabs}
\usepackage{tikz}
\usetikzlibrary{arrows.meta,positioning,calc}
\usepackage[square, sort&compress, numbers]{natbib}

\usepackage[dvipsnames]{xcolor}

\newtheorem{theorem}{Theorem}
\newtheorem{lemma}{Lemma}
\newtheorem{remark}{Remark}
\newtheorem{definition}{Definition}
\newtheorem{proposition}{Proposition}
\newtheorem{corollary}{Corollary}

\newcommand{\HALT}{\mathrm{HALT}_{\mathcal U}}
\newcommand{\runin}[1]{\par\noindent\textit{#1.}\ }

\begin{document}

\title{{Constructive realization of self-referential prediction limits in quantum control: Resource bounds and Gödel-safe architectures}}

\author{Salman Sajad Wani$^{1,*}$, \'Alvaro Perales-Eceiza$^2$, Saif Al-Kuwari$^1$, Mir Faizal$^{3,4,5,6}$}

\affil{$^1$Qatar Center for Quantum Computing, College of Science and Engineering, Hamad Bin Khalifa University, Doha, Qatar}

\affil{$^2${Computer Engineering Department, Polytechnic School}, Universidad de Alcal\'a, 28805 Madrid, Spain}

\affil{$^3$Canadian Quantum Research Center, 204-3002 32 Ave, Vernon, BC V1T 2L7, Canada}
\affil{$^4$Irving K.\ Barber School of Arts and Sciences, University of British Columbia Okanagan, Kelowna, BC V1V 1V7, Canada}
\affil{$^5$Department of Mathematical Sciences, Durham University, Upper Mountjoy, Stockton Road, Durham DH1 3LE, UK}
\affil{$^6$Faculty of Sciences, Hasselt University, Agoralaan Gebouw D, Diepenbeek, 3590, Belgium}

\noindent\textbf{Corresponding author email:} salmansajadwani@gmail.com

\begin{abstract}
Programmable quantum control systems increasingly rely on predictive modules for certification, real-time feedback, and autonomous decision-making{. This development raises} a fundamental question: can self-analyzing quantum platforms universally predict their own experimental outcomes? {Wolpert formalized a general impossibility of universal self-prediction{. Here} we translate that limitation into an explicit {laboratory obstruction that can be realized with finite resources. We consider settings with programmable quantum control} in which} predictors can be embedded as subroutines within the experiments they analyze. {Our diagonal construction uses} Kleene's recursion theorem {to transform} any deterministic bounded-time predictor into a reversible protocol encoding its own specification. {The resulting} protocol invokes the predictor {on that specification} and deterministically {produces} a classical pointer record {that contradicts} the forecast. {For efficient predictors, the} compilation has polynomial overhead and admits concrete physical realizations as a fault-tolerant quantum circuit and as a minimal {Mach--Zehnder} interferometer{.} {{These realizations connect}} computability-theoretic {self-reference} to programmable quantum hardware. {We also introduce and formally define G\"odel-safe {architectures. These architectures block the forbidden causal path from the protocol description to an actuator that can affect the pointer during the same run. We} analyze their implications for real-time quantum error correction, including the resulting expressiveness trade-offs. As quantum control loops grow in computational expressiveness, the limits of self-reference cease to be mere mathematical abstractions and become explicit engineering constraints for the reliable operation of autonomous quantum technologies.}
\end{abstract}


\section{INTRODUCTION}

The increasing sophistication of quantum control systems has enabled real-time certification, autonomous feedback, and predictive decision-making in programmable platforms. As these systems evolve toward self-analyzing architectures{,} {their predictive modules can} operate on representations of their own control protocols{. This development raises} a fundamental question: what are the limits of algorithmic prediction when the predictor can be embedded within the system it analyzes? {The theoretical impossibility of universal self-prediction is already established for abstract inference devices. D. Wolpert formalized this limitation through diagonalization, demonstrating that any finite inference device can be placed in a physical situation forcing contradiction with its own prediction \cite{Wolpert2008}. In our experimental context, ``diagonalization''\footnote{{Here} ``diagonalization'' {denotes a computability-theoretic construction; it is unrelated to diagonalizing a matrix or Hamiltonian. The term} originates in Cantor's diagonal argument (1891) {and related} methods later used by G\"{o}del (1931) to {prove} incompleteness and by Turing (1936) to establish the undecidability of the halting problem.} {has a simple operational meaning. The predictor is asked which bit the experiment will record. Its forecast becomes available to the controller during that same run. The controller then chooses a binary setting---or, in the optical realization, a phase---that makes the experiment record the opposite bit. Consequently, no deterministic, deadline-bounded predictor can be correct on every protocol in a class that permits this causal loop within the same run.}} {In the ideal Mach--Zehnder realization, the predictor forecasts which detector will click, and the controller chooses \(\phi=0\) or \(\pi\) so that the other detector clicks.}

{{This paper gives} a constructive realization of {Wolpert's intentionally abstract limit. It converts the limit} into an operational {laboratory obstruction implemented within a prescribed deadline}.} Concretely, we ask whether a deterministic {clocked} procedure {can correctly predict, by a stated deadline, the outcome recorded in a single shot.} {{The procedure has a} declared worst-case runtime bound {and is given}} a finite classical specification of {the experiment---its} control logic, settings, and readout rule. {Using Kleene's recursion theorem \cite{Rogers1967},} we prove that no such procedure can be universally correct on any {protocol class satisfying the reflective closure condition}. {For any fixed predictor $A$, we construct an explicit compiler $A \mapsto \mathcal{C}_A${. The compiler produces} a logically reversible protocol whose controller evaluates $A$ on {the protocol's own finite classical description. The controller then} deterministically enforces a recorded outcome that disagrees with $A$. The compilation incurs only polynomial overhead in logical depth and width{. It therefore admits a} standard fault-tolerant realization. Bennett uncomputation \cite{Bennett1973,Bennett1989} ensures the contradiction is exposed as a clean pointer bit rather than hidden in computational history. A minimal optics instantiation, a Mach--Zehnder interferometer {with a controlled phase}, makes the obstruction concrete using standard experimental components. Our result is logically distinct from prior results establishing undecidability in physical systems \cite{Eisert2012PRL,Cubitt2015,ShiraishiMatsumoto2021Undecidability,PeralesEceiza2025PhysRep}{. Those results} predominantly concern asymptotic or thermodynamic-limit properties{.}} {{By contrast,} the obstruction here is instance-wise, deadline-bounded, and realized within a prescribed runtime.}

Our construction connects three domains: computability theory (Kleene's fixed points), reversible computation (Bennett's uncomputation), and quantum mechanics (unitary evolution followed by {measurement in the computational basis}). The diagonal obstruction is computational {rather than} quantum-mechanical: it arises from self-reference, not quantum indeterminacy. {Measurement occurs in the computational basis after uncomputation has disentangled the pointer from the computational history, yielding a deterministic outcome.} {{Classical} reversible computation suffices in principle to realize the required reversible dynamics \cite{Bennett1973}{. Quantum circuits nevertheless} provide the most natural implementation framework{. Their} gates are inherently reversible, {fault tolerance} is rigorously established, and the tight integration of predictive logic and pulse routing in quantum control stacks creates the physical conditions for the self-referential vulnerability.}

{The {reflective} closure condition underlying the theorem is a precise architectural hypothesis rather than a generic property of quantum-control platforms. Modern quantum systems already contain several components of {such self-analyzing architectures}, including stored protocol descriptions, programmable low-latency control, embedded analysis routines, and feed-forward control decisions. However, the reflective closure condition required for the theorem is stronger{. The} current protocol description must be available to an analysis module as input data{. The} module output must {also} be routable, within the same run, into a subsequent {control setting that can affect the pointer. We call this route from the protocol description, through the analysis module, to pointer control within the same run the \emph{forbidden causal path}. The reflective closure condition is the availability of that path within the protocol class, and the diagonal construction exploits it.} This identifies a concrete architectural boundary for self-analyzing quantum technologies. The resulting practical issue is architectural: one must specify which data dependencies are allowed in self-analyzing controllers. {A universal prediction guarantee for the current run requires the forbidden causal path to be blocked. The current} protocol description may {still} be logged, hashed, audited, or {used to configure later runs. It must not, however, be} routed to {an actuator that can affect the pointer during the current run}.}

{{The obstruction therefore has concrete implications} for contemporary engineering practice. We first evaluate the {reflective closure condition} across modern control hardware {and show that the forbidden causal path} is physically admissible on current platforms. {Finite} memory, instruction sets, and bus architectures {support this capability, but do not enable it} by default. To avoid the deterministic failure forced by self-reference, we formally propose and define the \textit{G\"{o}del-safe architecture}{. This architecture enforces the \textit{G\"{o}del-safe condition} by blocking the forbidden causal path. The design logic parallels compiler theory: specific guarantees become verifiable when problematic control flows are forbidden, at the cost of reduced expressiveness; Sec.~\ref{Sec:computabilty} develops this analogy.} We then apply {this condition} to real-time Quantum Error Correction (QEC){. This application characterizes} the expressiveness trade-offs that arise when predictive modules are isolated from their own protocol descriptions {and discusses} their implications for broader autonomous quantum-control settings.}

{The remainder of the paper is organized as follows. Section~\ref{sec:Methods} introduces the constructive framework and the diagonal {construction}. Section~\ref{sec:Results} presents the theorem, proof, and experimental proposal. Section~\ref{sec:Discussion} analyzes the architectural implications, including G\"odel-safe {architectures} and applications to real-time QEC. Section~\ref{sec:Conclusions} summarizes the main conclusions and future directions. Detailed proofs, resource bounds, and noise modeling are provided in the Supplemental Material{.}}
\section{METHODS}
\label{sec:Methods}

We {combine} constructive methods from computability theory
({specifically, }Kleene's recursion theorem {and the $s-m-n$ theorem})
with reversible quantum computation{. This section specifies finite protocol
descriptions and clocked predictors. It then identifies the routing capability
required during the same run and captured by the reflective closure condition,
and constructs the compiler used in Theorem~\ref{thm:noprediction-letter}.}

\subsection{Diagonal construction}

We {formulate the diagonal construction for forecasting a single-shot outcome
by a prescribed deadline} in programmable closed-loop experiments. Each laboratory
procedure {$\mathcal{C}$} is assumed to admit a finite classical {binary encoding,
denoted $\xi={\mathcal{C}}\in\{0,1\}^*$} (netlist, pulse program, and
readout convention). A
clocked predictor is a fixed deterministic module $A$ 
that, on any input $\xi$, outputs {an} $m$-bit string
and halts by its declared deadline $T_A$. {The relevant control model is therefore finite-protocol based, rather than a
single microscopic Hamiltonian specification. The
control data are the classical settings encoded in the protocol description
$\xi$: gate choices, pulse or phase commands, feed-forward bits, timing
specifications, and readout conventions. The diagonal construction requires
only the following routing capability: within the same run, the first output
bit of $A(\xi)$ must be routable to a binary {setting that can affect the pointer} before
terminal readout. Thus, the control layer has the causal form
\begin{equation}
\label{eq:protocol-control-layer}
\xi
\longrightarrow
A(\xi)
\longrightarrow
\text{binary feed-forward setting}
\longrightarrow
X .
\end{equation}
A microscopic Hamiltonian may specify the platform-specific implementation of
a given control setting. The no-prediction argument depends on the finite
protocol description and this causal control structure.}

For every such $A$, we construct an explicit compiler $A\mapsto \mathcal{C}_A$ producing a finitely specified, logically reversible experiment. This protocol $\mathcal{C}_A$ contains an internally accessible copy of {${\mathcal{C}_A}$} and a designated pointer bit $X$ in the recorded classical string $Y_A$.

{The compilation process relies on two fundamental results from computability theory{. These results} ensure that the construction is computable and finitely specified. First, the $s$-$m$-$n$ theorem guarantees that the injection of a fixed parameter (in this case, the protocol's own classical description) into the control sequence is a computable operation with finite overhead. Building on this parameterization, Kleene's second recursion theorem establishes the existence of the self-referential fixed point \cite{Rogers1967}. Together, {the two results} guarantee that such a protocol $\mathcal{C}_A$ can be effectively constructed{. The protocol can evaluate} the predictor $A$ on its own complete description {${\mathcal{C}_A}$}
\footnote{Formally, \textbf{the $s$-$m$-$n$ theorem} states that for $m$ fixed parameters and $n$ free variables, a program description can be computably transformed so as to hardcode the fixed parameters. In the $m=1$, $n=1$ case, given a program description $e$ for a partial computable function $f(x,y)$, there exists a total computable function $s_1^1(e,x)$ that outputs the description of a program computing the unary function $y \mapsto f(x,y)$. \textbf{Kleene's second recursion theorem} states that for any total computable function $h$ mapping program descriptions to program descriptions, there exists a program description $e$ such that $e$ and $h(e)$ compute the same partial computable function.}.} {The pointer relation enforced within the prescribed deadline and its no-prediction consequence are stated formally in Theorem~\ref{thm:noprediction-letter} and verified in Sec.~\ref{sec:SMIII} of the Supplemental Material.}

\subsection{Physical realizability}

The construction is operational: $\mathcal{C}_A$ can be realized in finite
laboratory time{. Its reversible control uses} Bennett uncomputation
\cite{Bennett1973,Bennett1989} {and admits} unitary embedding up to the final
pointer readout{. That readout is} the only irreversible step. {The}
compiler {therefore outputs a} reversible control protocol with {a finite schedule,}
an explicit deadline{, and an} auditable classical pointer.

{Sections~\ref{sec:mzi-letter} and \ref{sec:ft-resource-scaling} develop two
physical realizations. A} {Mach--Zehnder} interferometer {with a controlled phase
implements the final binary pointer}. {A} fault-tolerant quantum circuit {provides
the second realization using} standard gates with polynomial overhead{. These
implementations make} the construction accessible to current noisy intermediate-scale
quantum (NISQ) platforms. {Operational admissibility requires the protocol language to} invoke
$A$ as a subroutine and route its output into {an} ordinary feed-forward {setting
during the same run; this is the capability captured by the reflective closure
condition. The architectural consequences of blocking this path are analyzed
in Sec.~\ref{sec:Discussion}.}

\section{RESULTS}
\label{sec:Results}

We now state and prove our main no-prediction theorem for clocked deterministic predictors.

\begin{theorem}[No-prediction for clocked deterministic predictors]
\label{thm:noprediction-letter}
Fix any deterministic clocked predictor $A$ of code length $k:=|{A}|$ with a declared worst-case runtime bound $\tau(k)$, meaning that for every admissible description $\xi$, the computation $A(\xi)$ halts within $\tau(k)$ steps and outputs an $m$-bit string. Let $P_A^{(0)}(\xi)\in\{0,1\}$ denote the first output bit of $A(\xi)$.

Consider any experiment class whose protocol language permits, within a single run, (i) executing $A$ as a subroutine on a supplied finite description $\xi$, and (ii) routing the resulting bit $P_A^{(0)}(\xi)$ into an ordinary binary control setting prior to the terminal pointer readout. Then there exists a finitely specified, logically reversible (hence unitarily embeddable up to the final pointer readout) experiment $\mathcal C_A$ in that class which produces an $m$-bit classical record $Y_A$ with first bit $X\in\{0,1\}$ satisfying
\begin{equation}
\label{eq:diag-letter}
X \;=\; 1 - P_A^{(0)}\!\bigl({\mathcal C_A}\bigr).
\end{equation}
In particular, the first-bit forecast {$P_A^{(0)}({\mathcal C_A})$} is wrong on that very instance, and hence {$A({\mathcal C_A})\neq Y_A$}.

The instance $\mathcal C_A$ is obtained by an explicit compilation $A\mapsto \mathcal C_A$; the required self-application {${\mathcal C_A}$} is guaranteed by Kleene's recursion theorem~\cite{Rogers1967}. If $\tau(k)$ is polynomial in $k$, then the compiled protocol has polynomial logical depth and width (explicit resource bounds and fault-tolerant embeddings are given in the Supplemental Material).
\end{theorem}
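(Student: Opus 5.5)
The plan is a standard diagonal argument in three stages. First, build a parameterized protocol family. Then close it under self-reference with Kleene's recursion theorem. Finally, make the result physical with Bennett's reversible compilation.

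\emph{Parameterized family.} Define a two-argument protocol template $\mathcal T(e,\cdot)$, written in the experiment-class language and independent of the run input. On parameter $e$ it does three things. It loads the finite string $e$ into a register. It calls $A$ as a subroutine on $e$, which hypothesis (i) permits, and obtains $P_A^{(0)}(e)$ within $\tau(k)$ steps. It then routes this bit into the binary control setting $b:=1-P_A^{(0)}(e)$ that drives the pointer, which hypothesis (ii) permits. Concretely, the pointer is prepared in $\ket{0}$ and the gate $X^{b}$ is applied, or in the optical realization the phase is set to $\phi=\pi b$. The remaining $m-1$ record bits are fixed, say to zero. By the $s$-$m$-$n$ theorem, the map $h:e\mapsto s^1_1(\ulcorner\mathcal T\urcorner,e)$ is total computable. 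It outputs the finite description of the protocol with $e$ hardcoded, and the added overhead is only $O(|e|)$.

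\emph{Fixed point.} Kleene's second recursion theorem, applied to $h$, gives a description $e^\*$ such that $e^\*$ and $h(e^\*)$ are the same protocol. One point needs care. Protocols must be identified by their descriptions, because $A$ reads the syntactic string $\xi$ and not the extension. So I would use the syntactic form of the theorem: a quine-style construction that yields a program literally printing its own code and then running $\mathcal T$ on it. With that form, the description $\mathcal C_A$ satisfies $\mathcal C_A=h(\mathcal C_A)$ as a string. Running $\mathcal C_A$ therefore computes $P_A^{(0)}(\mathcal C_A)$ and sets $X=1-P_A^{(0)}(\mathcal C_A)$. This is \eqref{eq:diag-letter}. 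Since $X\neq P_A^{(0)}(\mathcal C_A)$, the first bit of the forecast disagrees with the first bit of the record, so $A(\mathcal C_A)\neq Y_A$.

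\emph{Reversibility and resources.} Compile the classical circuit for $A$ on inputs of length $n=|\mathcal C_A|$ into a reversible circuit using Bennett's compute--copy--uncompute scheme. The steps are: compute $A(\xi)$ into ancillas, CNOT the first bit (negated) onto the pointer, then run $A$ in reverse. After this the ancillas return to $\ket{0}$ and the pointer sits in a computational-basis state. Measurement is then deterministic, and it is the only irreversible step. For the resource bound, $n=k+O(1)$, because the quine wrapper adds constant size beyond $\mathcal T$ and $A$. A time-$\tau$ machine has a reversible circuit of size $O(\tau^2)$ (via the tableau construction), or $O(\tau^{1+\epsilon})$ with Bennett pebbling. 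So polynomial $\tau(k)$ gives polynomial depth and width. Fault-tolerant embedding then adds only polylogarithmic overhead by the threshold theorem.

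\emph{Main obstacle.} The delicate step is making the fixed point hold \emph{syntactically} while keeping its size bounded. If the equality held only extensionally, $A$ could behave differently on $e^\*$ and on $h(e^\*)$, and the argument would break. If the description is not $O(k)$, then the bound $\tau(k)$, which is declared in terms of $|A|$, might not cover inputs of length $|\mathcal C_A|$. I would handle both issues with the explicit quine construction $\mathcal C_A=\langle\text{print-self};\mathcal T\rangle$. Its length is $|A|+c$ for a constant $c$ that depends only on the protocol language, so the runtime hypothesis, which holds for every admissible $\xi$, applies directly.
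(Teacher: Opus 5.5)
Your argument is correct, but it places the self-reference differently from the paper.

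\textbf{The paper's route.} Kleene's theorem is applied to a transformer $\Theta$ on reversible Turing-machine codes. This yields a program $D_A$ that builds the description $s=\mathcal C_{x^\ast,t_0}$ of its own bounded-simulation circuit and runs $A(s)$. It then encodes $1-b$ in its halting schedule: the halt flag is raised for the first time at step $t_0$ iff $b=0$. The experiment is the generic first-time-halting gadget of Lemma~\ref{lem:SMII-FY} applied to $(D_A,t_0)$, and its output $\mathrm{HALT}(D_A,t_0)$ is the pointer. The depth and width bounds are read off that gadget together with $|x^\ast|\le k+c_{\mathrm{fp}}$.

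\textbf{Your route.} You take the fixed point directly at the protocol level, via $s$-$m$-$n$ and a quine. You then write $1-b$ straight into the pointer, with Bennett compute--copy--uncompute around the circuit for $A$.

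\textbf{What each buys.} Your route is shorter. Because the diagonal bit is carried by the output rather than by an exact halting time, the relation $X=1-P_A^{(0)}(\mathcal C_A)$ depends only on the input--output behaviour of the fixed point. The paper's predicate is a timing property of $D_A$ itself. It therefore implicitly needs the intensional form of the fixed point, with the self-reproduction cost absorbed into $g_{\mathrm{comp}}$. The paper's route, in exchange, uses a single uniform circuit family (bounded simulation of an arbitrary program). The self-reference thus stays inside textbook Turing-machine computability, and the experiment class only has to contain those gadgets. Yours requires the protocol language itself to support the $s$-$m$-$n$/quine construction.

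\textbf{Two small corrections.} First, the string identity $\mathcal C_A=h(\mathcal C_A)$ cannot hold. The map $h=s^1_1(\ulcorner\mathcal T\urcorner,\cdot)$ embeds its argument and is length-increasing. What the quine actually gives, and all you use, is a program $q$ whose run writes the literal string $q$ and then executes $\mathcal T$ on it. Relatedly, the purely extensional recursion theorem already suffices for the diagonal relation, provided you take $\mathcal C_A:=e^\ast$ rather than $h(e^\ast)$. The output of $e^\ast$ then equals that of $h(e^\ast)$, namely $1-P_A^{(0)}(e^\ast)$, which is exactly the negated forecast on $e^\ast$ itself. The explicit quine is needed only to guarantee that the run literally invokes $A$ within the stated time and resource budget.

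Second, the worry that $\tau(k)$ might not cover inputs of length $|\mathcal C_A|$ is moot. The hypothesis bounds the runtime uniformly over all $\xi$, so $|\mathcal C_A|$ enters only through the marshalling cost.
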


\subsection{Proof roadmap}
The proof {proceeds in three steps. First, Kleene's} recursion theorem {supplies a
fixed point that makes the compiled description available to the controller.
Second, a bounded reversible computation obtains the forecast before the
scheduled deadline and writes its complement to the pointer. Third, Bennett
uncomputation clears the work registers before terminal readout. The formal
model and supporting constructions are given in Supplemental
Secs.~\ref{sec:SMI}--\ref{sec:SMIII}, particularly
Lemmas~\ref{lem:SMII-FY}, \ref{lem:SMIII-DA}, and
\ref{lem:SMIII-contradict}; the next subsection explains how these steps
produce the clean pointer record.}

\subsection{Construction and intuition} 
The instance $\mathcal C_A$ is a {feed-forward} loop {produced by the diagonal compiler}. Using the fixed-point guarantee of Kleene's recursion theorem~\cite{Rogers1967}, the controller in $\mathcal C_A$ can supply $A$ with the  {actual} compiled description ${\mathcal C_A}$, compute the forecast bit $b:=P_A^{(0)}({\mathcal C_A})$ by a hard deadline $t_0$, and then actuate the complementary pointer value $X:=1-b$. This enforces Eq.~\eqref{eq:diag-letter} on that very run and therefore the instance-wise failure $A({\mathcal C_A})\neq Y_A$. The runtime bound $\tau(k)$ is what makes the loop operational rather than evasive: it fixes a finite schedule in which the forecast must be produced strictly before the actuation time. Finally, the controller is uncomputed after writing the pointer, leaving a clean pointer subsystem for terminal readout.

{Operationally, there are only two cases: if the predictor returns \(b=0\), the controller sets \(X=1\); if it returns \(b=1\), the controller sets \(X=0\). Kleene's recursion theorem supplies the fixed point required to evaluate the predictor on the compiled protocol's own description; the complementation \(X=1-b\) itself is elementary.}

\begin{figure}[t]
  \includegraphics[width=\linewidth]{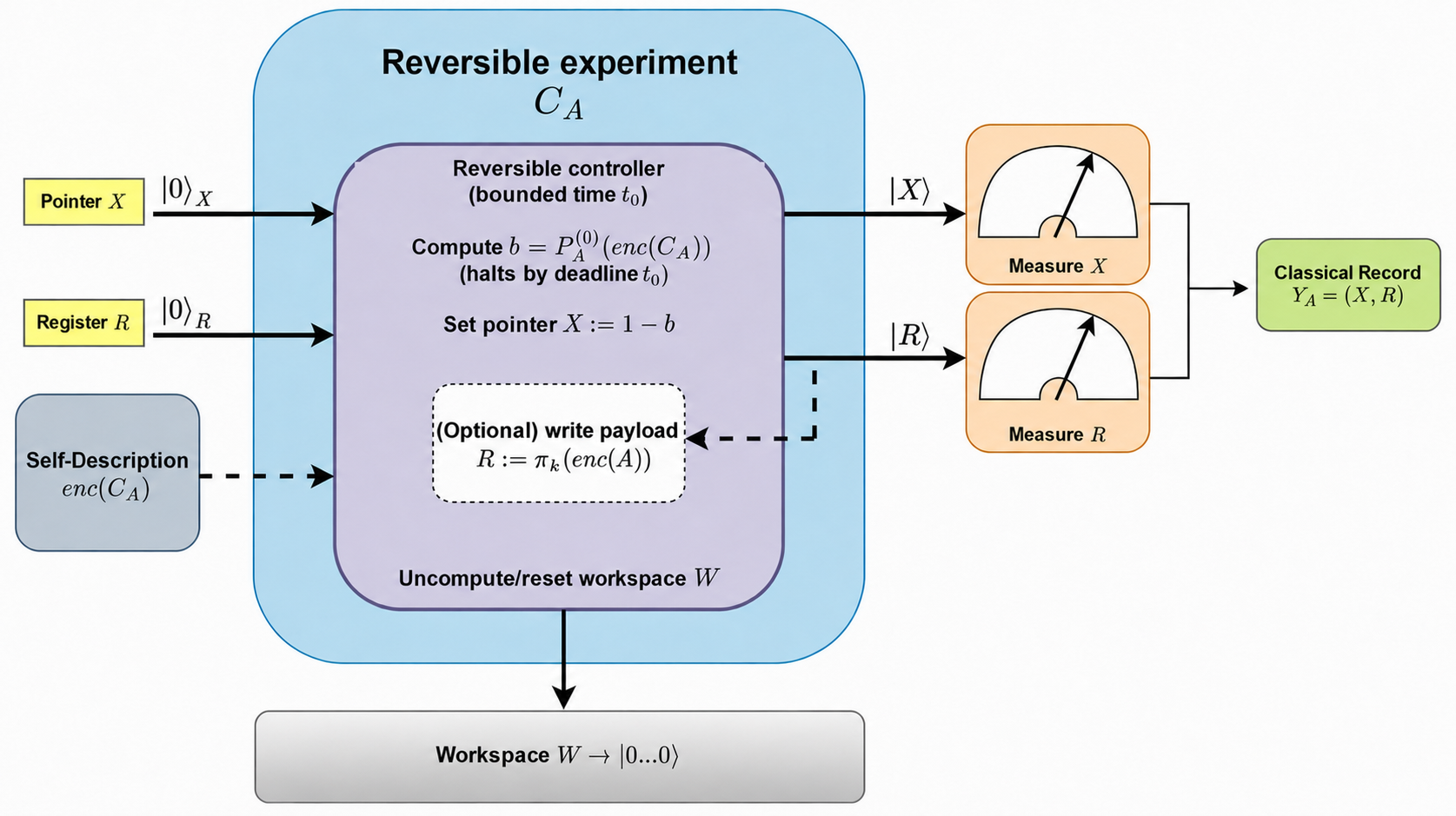}
\caption{\label{fig:control-loop}Compiled control loop for $\mathcal C_A$. A reversible controller computes $b=P_A^{(0)}({\mathcal C_A})$ by the deadline $t_0$ and sets the pointer bit $X=1-b$. Optionally, a payload $R=\pi_k({A})$ is written to a second register. The workspace is uncomputed before the final measurement {in the computational basis}, yielding the record $Y_A=(X,R)$ {for a single shot}.}
\end{figure}

For an information-preserving variant, one may take $m:=k{+}1$ and record $Y_A=(X,R)$, where $R:=\pi_k({A})\in\{0,1\}^k$ for a fixed length-preserving bijection $\pi_k$ with computable inverse. The no-prediction conclusion depends only on $X$; the payload simply makes the observed record explicitly information-conserving.

{
We decompose $\mathcal C_A$ into three reversible stages. Here, ``reversible''
is meant in the standard sense of reversible computation: each stage is a
bijection on the full logical configuration space, including clock registers,
work registers, ancillas, pointer registers, and payload registers, when
present. Consequently, if $F$ denotes the classical reversible map implemented
by a given stage, its quantum embedding is the permutation unitary
\begin{equation}
\label{eq:permutation-unitary}
U_F\ket{z}=\ket{F(z)}
\end{equation}
on computational-basis states, extended linearly to the full Hilbert
space{~\cite{Feynman1982}}.
The three stages are therefore unitary circuit blocks up to terminal readout.
The sole irreversible operation is the final computational-basis measurement
of the isolated pointer registers. 
In particular, for a clean pointer register initialized in $\ket{0}_X$, the
assignment $X:=1-b$ is implemented reversibly as
\begin{equation}
    \ket{b}\ket{0}_X
\longmapsto
\ket{b}\ket{1-b}_X.
\end{equation}
after which Bennett uncomputation reverses the internal computation while
preserving the pointer record for measurement~\cite{Bennett1973,Bennett1989}.
The three stages are:
}
\begin{enumerate}
\item[(D)]  {Deadline-bounded forecast.}
Using the fixed point from Ref.~\cite{Rogers1967}, the controller supplies $A$ with ${\mathcal C_A}$ and computes $b:=P_A^{(0)}({\mathcal C_A})$. Since $A$ halts within $\tau(k)$, the schedule sets a hard deadline
$t_0=\tau(k)$ plus fixed compilation/marshalling overhead (see Supplemental Material),
and then writes $X:=1-b$ into a clean pointer register.

\item[(P)]  {Payload writer (optional).}
Write $R:=\pi_k({A})$ into a clean register by reversible bit operations.

\item[(U)]  {Global uncomputation.}
Reverse the controller’s internal computation, returning clocks and ancillas to a standard blank state. Immediately before measurement, the pointer registers $(X,R)$ are isolated from the rest of the device, so the terminal readout is an ordinary computational-basis measurement of a classical record~\cite{Bennett1989}.
\end{enumerate}

\subsection{Experimental proposal}
\label{sec:mzi-letter}

{
The diagonal switch admits a direct laboratory realization as a {binary pointer implemented by} a single-photon Mach--Zehnder interferometer (MZI) {with a controlled phase}.  {Figure~\ref{fig:mzi-diagonal-setup} summarizes the complete implementation
chain, distinguishing the classical controller layer from the optical pointer
layer: the controller evaluates $A$ on the compiled description
${\mathcal C_A}$, computes the diagonal bit, and issues the corresponding
phase command to the MZI.}  Here, the MZI functions as an interferometric pointer by converting a discrete phase command into {a detection event at an output port through interference of a single photon}. A classical control bit selects $\phi\in\{0,\pi\}$; in the ideal limit, the interferometer maps this phase to one of the two output ports, and the corresponding detector event yields the pointer bit. The diagonal step is therefore a clocked feed-forward cycle in which the controller computes a bit, sets the optical phase, and records one classical bit in a single shot.
}

Consider an MZI with balanced $50{:}50$ beam splitters BS$_1$, BS$_2$, a phase shifter on one arm, and threshold detectors $D_0,D_1$ at the outputs. The path degree of freedom is a qubit with basis $\{\ket{0},\ket{1}\}$ (lower/upper arm), with the phase shifter acting on $\ket{1}$. {
The primary physical control parameter in this realization is the
interferometric phase $\phi$. At the path-qubit level, the corresponding
phase operation is
\begin{equation}
\label{eq:mzi-phase-unitary}
U_\phi
=
\ket{0}\!\bra{0}
+
e^{i\phi}\ket{1}\!\bra{1}.
\end{equation}
The diagonal controller restricts the phase to the two settings
\begin{equation}
\label{eq:mzi-binary-phase-control}
\phi=\pi b_{{\mathrm{diag}}},
\qquad
\phi\in\{0,\pi\}.
\end{equation}
} Under the standard convention for BS$_1$, a single-photon input prepares $\ket{\psi_{\mathrm{path}}}=(\ket{0}+\ket{1})/\sqrt 2$, and after a relative phase $\phi$ on $\ket{1}$ and recombination at BS$_2$, the ideal click probabilities are
$\Pr(D_0\mid\phi)=\cos^2(\phi/2)$ and $\Pr(D_1\mid\phi)=\sin^2(\phi/2)$ \cite{BornWolf,SalehTeich}.
We encode $X=0$ for a click at $D_0$ and $X=1$ for a click at $D_1$; thus $\phi=0$ yields $X=0$ and $\phi=\pi$ yields $X=1$ in the ideal limit. In this sense, the MZI provides a binary pointer whose value is fixed by a one-bit classical control input.  {The physical mechanism is phase-to-port conversion: BS$_1$ prepares a
coherent path superposition; the phase shifter imposes a relative phase between
the two arms; and BS$_2$ converts the relative phase into constructive
interference at one output and destructive interference at the
other~\cite{BornWolf,SalehTeich}.}

{
Let $A$ be the bounded-time predictor as defined earlier. The controller runs
$A$ on the actual compiled description ${\mathcal C_A}$ to obtain
\begin{equation}
b:=P_A^{(0)}({\mathcal C_A}).
\end{equation}
{At the protocol level, the} feed-forward control bit in the diagonal construction is
\begin{equation}
b_{\mathrm{diag}}:=1-b.
\end{equation}
In the MZI realization, this bit is converted into the physical phase control
of the interferometer:
\begin{equation}
\phi=\pi\,b_{\mathrm{diag}},
\qquad
\phi\in\{0,\pi\}.
\end{equation}
In the ideal limit this yields the single-shot diagonal relation
\begin{equation}
X=b_{\mathrm{diag}}
=
1-P_A^{(0)}({\mathcal C_A}),
\end{equation}
so one detector click implements the diagonal flip as an ordinary classical
measurement record. If used, payload bits are written to a separate classical
register; the MZI supplies only the designated pointer bit $X$.
}

To exclude race conditions, the commanded phase must settle before the photon completes the controlled arm. Let $T_{\mathrm{ctrl}}$ be the worst-case time to marshal ${\mathcal C_A}$, run $A$ to obtain $b$, and compute $b_{\mathrm{diag}}=1-b$, including interface latency, and let $T_{\pi}$ be the modulator settling time for a $\{0,\pi\}$ command. The optical layout must satisfy $T_{\mathrm{flight}}\ge T_{\mathrm{ctrl}}+T_{\pi}+\Delta$, where $T_{\mathrm{flight}}$ is the photon time of flight through the controlled arm (including any delay line) and $\Delta>0$ is a safety margin. The clocked assumption makes this checkable: since $A$ halts within $\tau(k)$, one upper-bounds $T_{\mathrm{ctrl}}$ in terms of $k$ and fixed compilation overheads (Supplemental Material) and chooses the delay accordingly. {The quantities $T_{\rm ctrl}$, $T_\pi$, $T_{\rm flight}$, and
$\Delta$ are scheduling parameters. Their role is to enforce causal timing in
the diagonal construction: they ensure that the selected phase in
Eq.~\eqref{eq:mzi-binary-phase-control} is applied before the photon reaches
the phase modulator.}

We summarize robustness by a compact {error envelope for a single shot} (noise model and derivation in the Supplemental Material). Let $V\in[0,1]$ be the fringe visibility, $\delta\phi$ the RMS phase-setting error about the targets $\{0,\pi\}$, $\eta$ the {overall efficiency for detecting a single photon from source to detector}, and $d_{\mathrm{bg}}\ll 1$ the per-detector background click probability per coincidence gate. We postselect on $\mathrm{acc}$, the event that exactly one detector clicks within the gate (discarding double clicks). For matched detectors (equal efficiencies and background rates), the postselected bit error obeys, to leading order in $d_{\mathrm{bg}}$,
$\Pr[X\neq b_{\mathrm{diag}}\mid \mathrm{acc}] \le \bigl((1-V)/2+\sin^2(\delta\phi/2)\bigr) + ((1-\eta)/\eta)\,d_{\mathrm{bg}}$.
The first contribution bounds wrong-port clicks from imperfect interference (finite $V$) and phase jitter; the second captures the dominant loss--background mechanism in the accepted sample, namely a missed detection together with a single spurious click in the wrong port.  {The quantities $V$, $\delta\phi$, $\eta$, and $d_{\rm bg}$ are
calibration and noise parameters of the optical implementation. They quantify
the reliability of the pointer relation {$X=b_{\mathrm{diag}}$} and are distinct
from the primary controls entering the diagonal construction.}

{
\begin{figure}[t]
    \centering
    \includegraphics[width=\linewidth]{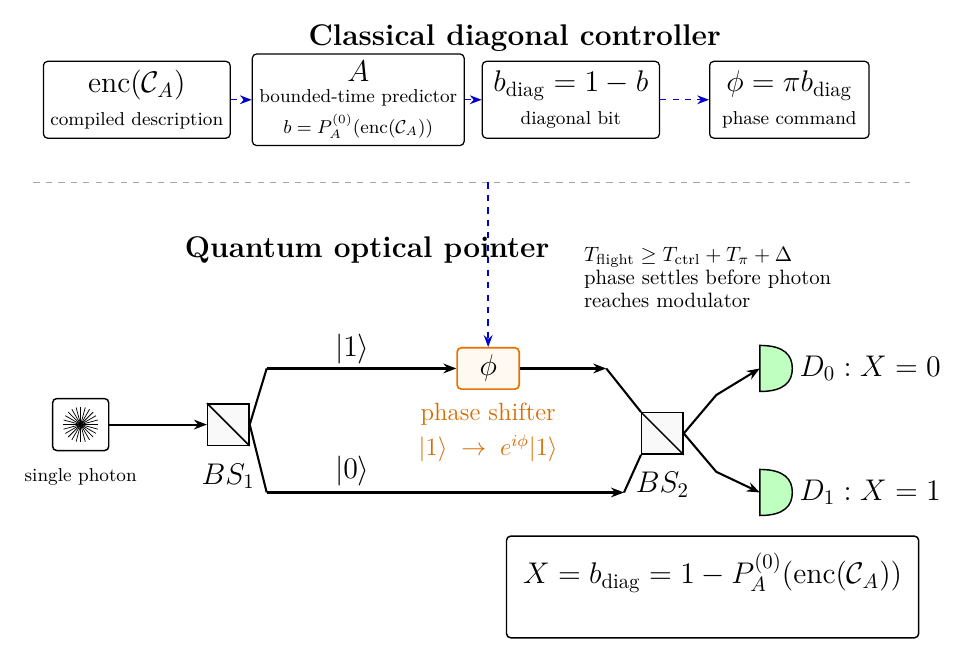}
\caption{\label{fig:mzi-diagonal-setup}
{Single-photon Mach--Zehnder implementation of the diagonal pointer bit.
The upper layer depicts the classical diagonal controller: the compiled
description ${\mathcal C_A}$ is supplied to the bounded-time predictor
$A$, whose first output bit $b=P_A^{(0)}({\mathcal C_A})$ is
complemented to form {$b_{\mathrm{diag}}=1-b$}. The commanded phase is
{$\phi=\pi b_{\mathrm{diag}}$}. The lower layer depicts the optical pointer: a
single photon enters a balanced MZI, the phase shifter acts on the upper path
$\ket{1}$ as $\ket{1}\mapsto e^{i\phi}\ket{1}$, and the output detectors
define the pointer assignments $D_0\mapsto X=0$ and $D_1\mapsto X=1$. In
the ideal limit, the detector record satisfies
{$X=b_{\mathrm{diag}}=1-P_A^{(0)}({\mathcal C_A})$}. The timing condition
$T_{\rm flight}\ge T_{\rm ctrl}+T_\pi+\Delta$ ensures that the phase settles
before the photon reaches the phase modulator.}
}
\end{figure}
}

\subsection{Fault-tolerant resource scaling}
\label{sec:ft-resource-scaling}
To keep the diagonal construction operational (rather than purely formal), we track finite time and space under an explicit commitment deadline. {The compiled protocol has a finite cutoff.} For a predictor $A$ of description length $k:=|{A}|$ and time-constructible budget $\tau(k)$, {we define this cutoff as $t_0 := {g_{\mathrm{comp}}}\!\bigl(k,|x^\ast|\bigr)+\tau(k)+1$. Here,} $x^\ast={D_A}$ is the recursion-theorem fixed point implementing the diagonal experiment specialized to $A$, and {$g_{\mathrm{comp}}$} denotes the (time-constructible, polynomial) overhead of compilation and orchestration. The fixed point does not introduce asymptotic overhead from self-reference{. For} any fixed diagonalization pipeline{,} there exists a constant $c_{\mathrm{fp}}$, independent of $A$, such that $|x^\ast|\le k+c_{\mathrm{fp}}$ and {$t_0\le \tau(k)+{\bar g_{\mathrm{comp}}}(k)+1$, with ${\bar g_{\mathrm{comp}}}(k):={g_{\mathrm{comp}}}(k,k+c_{\mathrm{fp}})$} (see Supplemental Material). In particular, polynomial $\tau(k)$ implies polynomial $t_0$.

A conservative clocked implementation {makes the obstruction explicit at polynomial logical cost. Specifically, it yields} $D_{\log}\lesssim t_0\log t_0+O(k)$ and $W_{\log}\lesssim t_0+O(k)$, plus $k{+}1$ pointer qubits. We translate these logical bounds into a fault-tolerant reliability statement under standard local stochastic noise. {For a surface-code compilation} with physical two-qubit error rate $p<p_{\mathrm{th}}$, the logical failure per location is exponentially suppressed with code distance $d$, namely $p_L(d)\le c_0(p/p_{\mathrm{th}})^{(d+1)/2}${. A} conservative union bound {then} gives $P_{\mathrm{logical}}\le N_{\mathrm{loc}}\,p_L(d)$ with $N_{\mathrm{loc}}\sim \alpha\,D_{\log}W_{\log}$. {Choose} $d$ to scale linearly with $k$ (e.g.\ $d=3k+1$){. For} $p<p_{\mathrm{th}}$ and polynomial $\tau(k)${, so that} $N_{\mathrm{loc}}=\mathrm{poly}(k)${, the exponential suppression then dominates the polynomial growth of $N_{\mathrm{loc}}$. The} overall logical failure {therefore} decays exponentially in $k$.

{Within this compilation model, the {overhead in physical qubits and the timing overhead for the} surface-code {implementation} satisfy
\begin{equation}
N_{\mathrm{phys}}=\Theta(W_{\log}d^2), \qquad T_{\mathrm{phys}}=\Theta(D_{\log}d).
\end{equation}
These relations give the {overhead in physical qubits and the cost in code cycles for} the specified fault-tolerant compilation once the distance $d$ is fixed. The reliability estimate applies to the stated local stochastic noise model {when the physical error rate is below threshold}. For noise models with strong spatial or temporal correlations, non-Markovian bath memory, leakage outside the computational subspace without a leakage-reduction mechanism, coherent systematic faults, adversarial faults, or physical error rates above $p_{\rm th}$, the bound on $p_L(d)$ requires a separate {analysis for that noise model}. The formal diagonal theorem remains unchanged, while the physical probability of observing the intended pointer relation must be assessed under the relevant noise model. The quantities $D_{\log}$ and $W_{\log}$ characterize the logical circuit, $d$ specifies the code distance, and $N_{\rm phys}$ and $T_{\rm phys}$ characterize the {overheads in physical resources}. These quantities are distinct from the primary control variables entering the diagonal actuation step.}

\section{DISCUSSION {AND ARCHITECTURAL IMPLICATIONS}}
\label{sec:Discussion}

Theorem \ref{thm:noprediction-letter} {realizes} Wolpert's limitation on self-predicting systems {within finite time and a single experimental run \cite{Wolpert2008}. This is an obstruction} under strict resource bounds{. Polynomial-time compilation and one} experimental run suffice to expose the failure. The limitation does not stem from noise, decoherence, or insufficient computational power{. It stems} from the logical structure of {the protocol.}

{
Concretely, the obstruction arises when a protocol class satisfies the reflective closure {condition and therefore admits the forbidden} causal path from the {description of the protocol being executed}, through an analysis module, {to an actuator that can affect the pointer during the same run}. The relevant controls {are} feed-forward settings {defined at the protocol level. The} MZI phase $\phi${, for example, provides a physical realization of such a setting} (see Sec.~\ref{sec:godel-safe-mzi-application}){. Fixed Hamiltonian details are} implementation-specific. The accompanying fault-tolerant resource and reliability estimates apply
under the noise assumptions used {to derive those estimates}. The following applications focus on architectures in which this causal path is either present or explicitly blocked by architectural constraints.
}

Before turning to specific applications, a central 
question is whether randomization can evade the 
diagonal obstruction. Consider an \emph{outcome 
selector} $\Sigma$: a deadline-bounded module that 
takes a protocol specification {$\xi={C}$} 
and outputs a definite bit $b\in\{0,1\}$ intended to 
match the single-shot pointer record. One might 
design $\Sigma$ to incorporate randomness, using 
internal classical noise, quantum fluctuations, or 
device-independently certified random bits, hoping 
to evade deterministic diagonalization through 
irreducible stochasticity. Apply 
Theorem~\ref{thm:noprediction-letter} with 
$A_\Sigma(\xi):=\Sigma(\xi)$. The diagonal {construction} 
produces protocol $C_\Sigma$, unitarily embeddable 
via reversible control and uncomputation, whose 
recorded bit satisfies {$X = 1 - \Sigma({C_\Sigma})$}. Crucially, $C_\Sigma$ executes 
$\Sigma$ within the same run, observes the 
\emph{realized} value of $b$ produced in that 
specific instance, and actuates $X:=1-b$. Since 
diagonalization acts on the actual bit generated 
(not its prior distribution), $\Sigma$ fails with 
certainty regardless of the randomness source 
quality, distinguishing this obstruction from 
statistical sampling limitations where failures are 
probabilistic.

This establishes a fundamental boundary{. Unitary} dynamics and the Born rule
algorithmically predict measurement statistics over repeated runs{. However},
no deadline-bounded procedure acting solely on 
protocol specifications can universally determine 
individual single-shot records whose frequencies 
realize those statistics. The limitation is logical 
rather than statistical, persisting in the noiseless 
limit and applying to deterministic and randomized 
procedures alike.

{
Analogously to classical undecidability results, we emphasize that the obstruction established here is a strict worst-case limitation. The diagonal argument guarantees the construction of at least one specific, adversarially compiled instance {${\mathcal{C}_A}$} on which a given predictor must fail. It does not imply that predictive control fails on typical or practical inputs. Operationally, a predictor may achieve high accuracy and remain broadly effective across the vast majority of standard, non-self-referential tasks. What the theorem explicitly rules out is the stronger guarantee of \emph{universal} {correctness within the same run over any protocol class satisfying the reflective closure condition}. This distinction is particularly important for real-time quantum error correction{. In that domain, fault tolerance} relies on strict worst-case reliability, and predictable deterministic failures pose a challenge that average-case utility cannot entirely offset. To contextualize this theoretical limit, we now evaluate the extent to which modern control stacks physically satisfy the reflective closure {condition}.


\subsection{The {Reflective} Closure {Condition} in Modern Control Hardware}
\label{sec:Closure}

The practical applications of Theorem~\ref{thm:noprediction-letter} concern {protocol classes satisfying the reflective closure condition. The forbidden causal path routes the result of} self-analysis {performed during the run into an actuator that can affect the pointer}, rather than {routing} ordinary measurement feedback:
\begin{equation}
\label{eq:unsafe-self-analysis-path}
{\mathcal{C}_r}
\longrightarrow
A({\mathcal{C}_r})
\longrightarrow
u_r
\longrightarrow
X_r .
\end{equation}
Here, {${\mathcal{C}_r}$} is the finite description of the protocol executed in run $r$, $A$ is a deadline-bounded predictor, certifier, or validation module, $u_r$ is a {control setting applied during run $r$ that can affect the pointer}, and $X_r$ is the pointer record. The architectural implication is that self-analysis of the current run may be recorded or used to configure later runs, but must not be routed {to an actuator that can modify} the predicted or certified record {during that run}.

{The operational relevance} of this vulnerability depends on whether physical controllers can instantiate {Eq.~\eqref{eq:unsafe-self-analysis-path}} within finite resources. Modern platforms possess finite memory and instruction sets{. They increasingly support} access to stored data {during a run} and low-latency feedback{. These features make the reflective closure condition} physically meaningful within finite resource regimes.

Historically, quantum controllers operated as static arbitrary waveform generators (AWGs), compiling and uploading pulse sequences for passive playback. Such open-loop architectures do not satisfy the {reflective closure condition}, as they cannot introspectively evaluate their own stored program during execution.

By contrast, contemporary quantum control stacks have shifted toward dynamic, highly integrated designs combining programmable logic, shared-memory architectures, tightly coupled data paths, and real-time feedback pathways. Hardware platforms based on FPGA and SoC technologies (including systems such as QICK~\cite{Stefanazzi2022QICK}, RISC-Q~\cite{Liu2026RISCQ}, OPX programmed with QUA~\cite{QUA26}, and ZQCS~\cite{ZQCS26}) support architectures with low-latency classical control and conditional execution during a run. These features make it possible, in principle, to implement control flows in which stored protocol descriptions or control parameters can be accessed and processed during execution. At the software level, control frameworks and pulse languages further exploit these capabilities, compiling analysis routines that act on stored descriptions or control parameters and route the resulting signals into feed-forward operations {during the same run}.

{Satisfaction of the reflective closure condition} is not automatic{; it depends on} specific software configurations. The diagonal construction requires an explicit engineering choice to expose the protocol description to the predictor input during the same run. Because the {construction} produces a specific finite instance {${\mathcal{C}_A}$}, its realization requires only that the hardware store and evaluate this description within its execution window. Under these conditions, the obstruction can be physically instantiated.

\subsection{G\"{o}del-Safe Architectures}
\label{sec:GodelSafeDefinition}

To prevent the deterministic failure induced by diagonalization, programmable quantum platforms require explicit structural constraints. Having established that modern shared-memory architectures {can admit the forbidden causal path}, we formalize this requirement {as the G\"odel-safe condition}.

\subsubsection{Formal definition: the G\"odel-safe {condition}}
\label{sec:godel-safe-architecture}

\begin{definition}[\textbf{G\"odel-safe architecture}\footnote{
{The term ``G\"odel-safe'' is used to emphasize the role of self-reference and diagonal constructions, in the spirit of G\"odel's original formulation of semantic self-reference, where formal systems encode statements about their own structure.
}}]
\label{def:godelsafe}

Let $U_r^{(X)}$ denote the set of {control variables available during run $r$} that can
causally influence the pointer record $X_r$. A controller {has a}
\textbf{G\"odel-safe {architecture}} with respect to $X_r$ if {it satisfies the
\textbf{G\"odel-safe condition}: within run $r$,} there is no causal path from the
{description of the protocol being executed} to those variables:
\begin{equation}
\label{eq:godel-safe-condition}
{\mathcal C_r}
\not\leadsto
U_r^{(X)}
\qquad
\text{within run } r .
\end{equation}
Equivalently, the forbidden {causal path} is
\begin{equation}
\label{eq:godel-forbidden-pattern}
{\mathcal C_r}
\to
A({\mathcal C_r})
\to
U_r^{(X)}
\to
X_r .
\end{equation}
\end{definition}

Ordinary real-time feedback remains admissible under {the G\"odel-safe} condition. In
particular, feedback from physical measurement data has the allowed form
\begin{equation}
\label{eq:ordinary-feedback-allowed}
Y_{<t}^{\rm sensor}
\to
u_t
\to
X_r ,
\end{equation}
where $Y_{<t}^{\rm sensor}$ denotes sensor data available before the
control action. The {G\"odel-safe} condition also permits {analysis of the current protocol during the run} to be
routed to non-actuating destinations or later runs:
\begin{equation}
\label{eq:future-run-allowed}
{\mathcal C_r}
\to
A({\mathcal C_r})
\to
\text{log, audit, or future run}.
\end{equation}

The resulting restriction is therefore precise: {the forbidden causal path} is excluded, while standard measurement feed-forward {and fixed decoding remain admissible, as do validation after a run and adaptation between runs}. {Put plainly, the G\"odel-safe condition does not prohibit feedback in general. Sensor data and syndrome measurements may still control the current run. What it excludes is the forbidden causal path from the description of the protocol being executed, through the analysis module, to a control setting that can affect the pointer during that run.}

The {G\"odel-safe condition in} Eq.~\eqref{eq:godel-safe-condition} excludes the hypotheses required for Theorem~\ref{thm:noprediction-letter}. A G\"odel-safe architecture therefore renders the diagonal {construction} physically inadmissible, restoring the possibility of universal predictive consistency at the cost of expressiveness analyzed below. {This condition applies} only to predictors in the sense defined here; standard syndrome-based QEC feedback remains admissible.

The separation required by Eq.~\eqref{eq:godel-safe-condition} can be realized via physically disjoint memories (as in Harvard architectures), via memory-protection mechanisms in unified designs, or via {restrictions enforced by the compiler} that prevent access to the protocol description. Blocking the forbidden {causal path}~\eqref{eq:godel-forbidden-pattern} can be enforced by
hardware wiring (no physical connection between predictor output and control
registers upstream of the pointer) or by bus-arbitration logic that blocks
predictor-to-control transactions affecting $X_r$. By enforcing this strict
separation, the {forbidden causal path is physically blocked}: the predictor $A$
can no longer compute $A({\mathcal C_A})$ inside the same run, thereby
preventing physical realization of the {self-application at the fixed point that
the diagonal construction requires}.

{The G\"odel-safe} condition is sufficient, but not strictly necessary{, for preventing the diagonal construction. Alternative enforcement mechanisms may also prevent the construction. Examples include static analysis performed by the compiler and} restrictive type systems that forbid self-reference algorithmically{. Any such mechanism must prevent every} compiled instance $\mathcal{C}_A$ from evaluating the predictor $A$ on its own description within a single run. {Hardware enforcement, however,} is theoretically unambiguous, directly implementable in contemporary FPGA control stacks, and less dependent on software correctness.

\subsubsection{Worked application: Violation of the G\"odel-safe condition in real-time phase control}
\label{sec:godel-safe-mzi-application}

To make the definition concrete, we examine the MZI implementation of Sec.~\ref{sec:mzi-letter} through the lens of the G\"odel-safe {condition}. This example {deliberately violates that condition}: it realizes the minimal instance of the forbidden {causal path}~(\ref{eq:godel-forbidden-pattern}), thereby exposing the vulnerability in physical hardware.

The controlled quantum system is the single-photon path qubit, and the {interferometric phase is the control variable that can affect the pointer}. During a run, the controller computes
\begin{equation}
\label{eq:app-b-def}
b={P_A^{(0)}(}{\mathcal{C}_r}),
\end{equation}
computes the diagonal bit
\begin{equation}
\label{eq:app-bdiag-def}
b_{{\mathrm{diag}}}=1-b,
\end{equation}
and sets the phase according to
\begin{equation}
\label{eq:app-phase-command}
\phi=\pi b_{{\mathrm{diag}}}.
\end{equation}
In the ideal limit of the MZI, with the detector assignment $D_0\mapsto X=0$ and $D_1\mapsto X=1$, the optical pointer satisfies
\begin{equation}
\label{eq:app-X-bdiag}
X_r=b_{{\mathrm{diag}}}.
\end{equation}
Combining Eqs.~\eqref{eq:app-b-def}--\eqref{eq:app-X-bdiag} yields
\begin{equation}
\label{eq:app-diagonal-relation}
X_r={1-P_A^{(0)}(}{\mathcal{C}_r}).
\end{equation}
For the fixed-point run $\mathcal{C}_A$, this becomes
\begin{equation}
\label{eq:app-fixed-point-relation}
X_{\mathcal{C}_A}
=
{1-P_A^{(0)}(}{\mathcal{C}_A}),
\end{equation}
which is the diagonal contradiction realized as a single-photon measurement {controlled by phase}. Thus, the MZI is not introduced as a new hardware primitive; it serves as a minimal physical example showing how a {bit produced by self-analysis can determine a setting of the quantum control during the same run}.

This example also identifies the practical restriction. A controller that permits the causal chain
\begin{equation}
\label{eq:app-forbidden-mzi-path}
{\mathcal{C}_r}
\to
A({\mathcal{C}_r})
\to
\phi_r
\to
X_r
\end{equation}
satisfies the reflective closure condition used by the theorem. A G\"odel-safe {architecture blocks this dependency within the same run}. The output of {$A({\mathcal{C}_r})$} may be archived, {verified after the run}, or used to update future protocols{. It is excluded, however,} from setting the current phase $\phi_r$ when $\phi_r$ is causally upstream of the pointer $X_r$.

\subsection{Real-Time Quantum Error Correction: A Case Study {of the G\"odel-Safe Condition}}
\label{sec:QEC_CaseStudy}

{Real-time} quantum error correction (QEC) {provides a concrete case study of the obstruction's architectural implications}. Fault-tolerant architectures require classical control stacks capable of processing syndrome data and executing feed-forward corrections within the coherence time of the physical qubits \cite{Google2023SurfaceCode,GoogleQuantumAI2024QEC}. As control algorithms grow in sophistication, the boundary between data processing and instruction scheduling becomes increasingly porous{. Prospective} next-generation QEC implementations {therefore come} within the scope of the limits established by Theorem~\ref{thm:noprediction-letter}.

\subsubsection{{Current} real-time QEC {and the G\"odel-safe condition}}
\label{sec:GS-QEC}

Existing {demonstrations of} real-time QEC {that implement} low-latency decoding, such as Refs.~\cite{Sivak2023RTQEC,GoogleQuantumAI2024QEC}, are consistent with {the} G\"{o}del-safe {condition} in their current implementations. Their decoders operate as fixed classical routines on syndrome streams, without {access during a run} to their executable protocol descriptions. The decoder receives a history of syndrome bits $S_{<t}$, applies a precompiled decoding graph or lookup table, and outputs a correction instruction $u_t$. The resulting causal path
\begin{equation}
\label{eq:qec-safe-pattern}
S_{<t} \to \text{Decoder}(S_{<t}) \to u_t
\end{equation}
matches the allowed pattern of Eq.~\eqref{eq:ordinary-feedback-allowed}, as the input consists exclusively of physical measurement data rather than a representation of the running protocol.

Current QEC designs therefore remain within standard feedback topologies{. They} do not expose their own executable descriptions to analysis within the decoding loop {during a run} and thus do not instantiate the reflective closure condition of Theorem~\ref{thm:noprediction-letter}. However, the drive toward minimal latency has already led to highly integrated architectures that bypass intermediary CPUs{. Distributed} RISC-Q platforms{, for example, demonstrate decoding and feedback loops with an} end-to-end {latency of} 446 ns \cite{Liu2026RISCQ}. As physical routing latencies approach their fundamental limits, prospective next-generation designs may attempt to optimize correction by making the decoder \emph{instruction-aware}{. Such a decoder would} access information about its own execution state or pending control instructions{, such as} pipeline status or instruction queues{,} during the run. Such architectures would {make control decisions during the current run depend} on representations of the active control process. In certain configurations, this would approach the forbidden {causal path} of Eq.~\eqref{eq:godel-forbidden-pattern}, in which the control specification influences {actuation during the current run}. In this regime, the diagonal construction applies{. It implies} the existence of specific protocol instances that provoke deterministic prediction failures. This structural vulnerability is in tension with the strict worst-case reliability guarantees required for fault-tolerant QEC operation.

\subsubsection{Hardware enforcement and expressiveness trade-offs}
\label{sec:qec-trade-offs}

{Future QEC architectures that add introspection within the decoding loop} for latency optimization or meta-learning {must enforce the} G\"odel-safe {condition in hardware. The required separation is} ${\mathcal{C}_r} \not\leadsto U_r^{(X)}${, as stated in} Eq.~\eqref{eq:godel-safe-condition}{. It} can be enforced through physically disjoint decoder data paths from instruction memory (Harvard architectures), memory-protection units that block access to protocol description regions, or bus-arbitration logic preventing predictor-to-correction transactions within the same run.

This enforcement imposes concrete expressiveness costs{. It excludes} control capabilities that rely on introspection of the control process {during the same run}. First, \emph{meta-learning {within the decoding loop}} is sacrificed{. The} decoder cannot update its own model parameters based on {an audit, performed in real time,} of its computational load, timing, or control structure. Second, \emph{dynamic latency optimization} is constrained{. The} decoder cannot inspect its own execution queue or pipeline state to adapt scheduling decisions based on {congestion observed in real time}. Third, \emph{introspective adaptation {at the instruction level}} is excluded {because} it would require access to representations of the active protocol during the same run. Instead, QEC protocols must rely on conservatively compiled worst-case bounds for classical execution time.

These trade-offs {restrict} the space of admissible control policies{. In} unrestricted architectures, control actions may depend {on both} physical measurement data {and the description ${\mathcal{C}_r}$ of the protocol being executed. A} G\"odel-safe architecture {instead excludes the forbidden causal path from that description to an actuator that can affect the pointer during the same run}. As in classical computability and compiler theory (Sec.~\ref{Sec:computabilty}), this {restriction on control flows recovers} verifiable guarantees at the expense of flexibility. In the present setting, the consequence is architectural{. Any extension toward adaptive decoding that is instruction aware} must explicitly enforce the G\"odel-safe {condition. It must accept} a controlled and selective reduction in expressiveness to preserve the worst-case reliability required for fault-tolerant operation.

\subsection{Extensions to Other Autonomous Domains}

Beyond the primary architectural vulnerabilities in real-time quantum error correction, the obstruction established by Theorem~\ref{thm:noprediction-letter} applies to any quantum system whose control stack {admits the forbidden causal path}. The following sections highlight both concrete application domains and their broader computational implications.

}

\subsubsection{Adversarial validation {of predictors and digital twins}}
\label{sec:adversarial-validation}

Testing and validation of quantum hardware and classical simulators present fundamental challenges. Established methods include quantum volume benchmarking \cite{Cross2019}, randomized benchmarking for gate fidelities \cite{Emerson2005}, and application-oriented assessments for the NISQ era \cite{Preskill2018}. A particularly relevant context concerns the validation of \textit{Digital Twins}, classical simulators that emulate the precise noisy dynamics of specific quantum processors \cite{Jaschke2024DigitalTwin,Muller2025DigitalTwin}. 

Our compiler $A\mapsto\mathcal{C}_A$ provides a systematic adversarial harness{.} {{It applies to} modules that assert universal {correctness for single-shot outcomes over a protocol class satisfying the reflective closure condition}. Such a module may be described as a predictor, certifier, validator, or digital-twin model{. The} theorem applies only {to a module that} acts as a deadline-bounded map{. In addition, the module's output must drive an actuator that can affect the pointer during the current run. Under these conditions}, executing $\mathcal{C}_A$ produces a concrete self-referential instance on which the first-bit prediction deterministically fails.}

{This statement does not concern standard digital-twin applications, which typically predict distributions, calibrated dynamics, or {performance in future runs}. Those uses fall outside the diagonal failure mode unless their outputs are routed {to actuation that can affect the pointer during the same run}. The practical role of the theorem is therefore to test and delimit universal self-prediction claims, leaving ordinary statistical validation intact. This adversarial test complements existing benchmarking methods by probing logical self-consistency rather than statistical fidelity{. It} can be integrated into testing pipelines for quantum control software as a verification tool for G\"odel-safe {architectures}.}

\subsubsection{Device-Independent Quantum Certification}

Device-independent (DI) certification \cite{Mayers2004, Acin2007, Supic2020, Gocanin2022} provides statistical guarantees that quantum processes are genuine by observing violations of Bell inequalities{. It does so} without trusting the internal hardware. While DI protocols resist noise and adversarial spoofing, their statistical guarantees implicitly assume fixed causal structures. Our theorem indicates that a DI certifier can be subject to deterministic failure if its {architecture admits the forbidden causal path}. Consequently, the structural limitations imposed by a G\"{o}del-safe architecture provide a necessary causal foundation that complements statistical device independence.

\subsubsection{Single-Shot Quantum Machine Learning}

Single-shot QML frameworks \cite{RecioArmengol2024SSQML,Liu2025YOMO} aim to extract task-relevant inferences directly from single-shot measurement records{. They thereby eliminate} the need for statistical averaging over repeated runs. Theorem~\ref{thm:noprediction-letter} operates in precisely this regime: it does not rely on asymptotics or IID assumptions. Crucially, even in the noiseless, physically deterministic limit, any bounded-time predictor fails on its own self-referential instance. The limitation is purely computational and not a consequence of quantum indeterminacy. It establishes that no deadline-bounded procedure can universally deduce single-shot outcomes from protocol specifications alone. Therefore, single-shot QML pipelines must {exclude the forbidden causal path} to maintain verifiable single-shot guarantees.

\subsubsection{{Limits Induced by Self-Reference across Computational Frameworks}}

{The architectural vulnerabilities identified across these quantum domains are not isolated hardware artifacts, but operational manifestations of a universal structural boundary. This limit on predictive universality is conceptually aligned with the work of Prokopenko \emph{et al.}~\cite{prokopenko2019self}, who emphasized the role of self-reference and related diagonal structures in generating undecidable behavior in complex dynamical systems. While their analysis focuses on emergent dynamics in generalized systems, the present work considers the same underlying mechanism in the context of physically implemented control.}

{Building on this generalized perspective, 
recent work by Wolpert~\cite{wolpert2024implications} applies Kleene’s second recursion theorem to the cosmological simulation hypothesis{. The analysis demonstrates} the formal possibility of self-simulation in computational universes obeying the physical Church--Turing thesis{. It also invokes} Rice’s theorem to derive associated undecidability results. While our construction defines a finite-time operational constraint for quantum control hardware and Wolpert’s analysis addresses abstract computational universes, both rely on the same recursion-theoretic machinery. This convergence across domains highlights a common principle: unrestricted self-reference imposes fundamental limits on universal predictive guarantees.}

{Finally, the structural role of self-reference in limiting consistent inference is conceptually related to Frauchiger and Renner's no-go theorem for self-referential reasoning in quantum theory~\cite{frauchiger2018quantum}. However, we emphasize that the obstruction identified here is strictly computational and does not rely on quantum measurement postulates or Wigner's friend-type scenarios. Instead, it arises from the logical topology of self-reference, illustrating that universal predictive consistency breaks down when a system attempts to formally analyze its own execution logic.}

\subsection{{Insights from computability theory for G\"{o}del-safe {architectures}}}
\label{Sec:computabilty}

{{Classical computability theory provides a useful design analogy.} Turing's halting problem shows that no procedure can decide, in general, whether an arbitrary program halts. Rice's theorem provides a powerful generalization: every non-trivial semantic property of a program is undecidable~\cite{Rice1953}. {Its} proof proceeds by reduction from the halting problem{. This reduction illustrates} how an apparently isolated limitation extends to broad classes of questions about program behavior. These results do not restrict the class of computable functions, since modern programming languages remain Turing-universal{. Instead, they} limit what can be {verified} prior to execution. Compiler theory addresses this tension through type systems and other conservative analyses{. These methods} restrict program structure, making specific properties decidable and ensuring static guarantees{. They thereby trade} expressive flexibility for reliability~\cite{Pierce2002TAPL,Cousot1977,Aho2006Compilers}.}

{{The} same diagonal mechanism {produces an analogous constraint here. In both settings}, the limitation stems from attempting to evaluate properties of a system within its own formal description. In our setting, this becomes a restriction on admissible {control strategies used during the same run. The restriction excludes the forbidden causal path}. G\"odel-safe architectures implement {it} by blocking precisely {that path}. The result is a trade-off between certifiability and expressiveness{. Achieving} universal predictive consistency requires structural constraints. The progression from the halting problem to Rice's theorem and ultimately to constraints in compiler design illustrates how abstract impossibility results can give rise to concrete design principles. Our contribution follows the same trajectory{. Starting} from Wolpert's abstract impossibility, we develop a constructive {realization under finite resources and derive} explicit resource bounds and expressiveness trade-offs{. We then identify the forbidden causal path} underlying the obstruction and formulate its exclusion as {the G\"odel-safe condition implemented} by G\"odel-safe architectures.}


\section{CONCLUSIONS}
\label{sec:Conclusions}

We {provide} an operational no-go theorem for universal
deterministic single-shot prediction under finite resources. {It gives} a
constructive realization of {Wolpert's limitation} on self-predicting systems
{\cite{Wolpert2008}. For} any fixed bounded-time predictor{,} the {compiler based on the recursion theorem
constructs a protocol that runs in finite time. In a single shot, that protocol
produces a pointer record that contradicts the} forecast on that {self-referential instance
\cite{Rogers1967}. Reversible}, unitarily embeddable control and Bennett
uncomputation {make the obstruction} realizable as a {laboratory protocol rather
than merely an abstract computability statement
\cite{Bennett1973,Bennett1989}. The result complements} undecidability phenomena
in quantum dynamics {\cite{PeralesEceiza2025PhysRep}. It remains} logically
independent of Bell and Kochen--Specker constraints
\cite{Bell1964,KochenSpecker1967}.

{Beyond the constructive result itself, the analysis identifies a concrete architectural rule for self-analyzing quantum technologies. The {description of the current protocol} may be inspected, logged, audited, or used to configure later runs{. When a universal} prediction guarantee is asserted {for that same run, however, the forbidden causal path from that description to an actuator that can affect the pointer during the run must be blocked. This restriction} preserves ordinary measurement feedback and fixed {decoders operating in real time. It removes only} the dependency exploited by the diagonal construction. {We introduced and formally defined \textit{G\"odel-safe architectures} to} make this design principle explicit in autonomous frameworks{. Real-time} Quantum Error Correction {is a central example because} worst-case reliability matters operationally. In this form, the result makes explicit a trade-off between control expressiveness and certifiability{. This trade-off mirrors} the structural restrictions adopted in classical compiler design. Enforcing these structural constraints is therefore an increasingly relevant engineering consideration for maintaining logical consistency as {systems for quantum control} become more autonomous and introspective.}

Several {theoretical} directions remain open. It is natural to delineate maximal non-self-referential families of experiments on which deadline-bounded prediction can be complete{. A related goal is} to quantify trade-offs between universality, time bounds, and admissible adaptivity. {This line of inquiry may clarify whether instruction-aware predictors can remain statistically useful in average-case scenarios despite their guaranteed worst-case failure.} One can also extend the analysis to approximate or distributional predictors, where success is probabilistic rather than {deterministic}{. Such an extension should} seek {precise} reliability bounds under diagonal adversarial compilation. {On the practical engineering side, we have identified specific {ways to enforce the} G\"odel-safe {condition in QEC. Extending} and optimizing analogous constraints for other advanced domains, such as digital twins and device-independent certification, remains a challenge.}

{As programmable {loops for quantum control} evolve toward greater computational expressiveness, the limits of self-reference transition from theoretical bounds to concrete constraints on architecture and system design. Enforcing these architectural boundaries will become increasingly important for the reliable operation of future autonomous quantum technologies.}

\section*{Data availability statement}
No new data were created or analysed in this study. Data sharing is not applicable to this article.

\ack{We thank Francesco Marino for helpful discussions and valuable comments on the manuscript.}

\clearpage


\renewcommand{\HALT}{\mathrm{HALT}_{\mathcal M_{\mathrm{rev}}}}
\renewcommand{\thesection}{S\arabic{section}}
\renewcommand{\theequation}{S\arabic{equation}}
\renewcommand{\thefigure}{S\arabic{figure}}
\renewcommand{\thetable}{S\arabic{table}}

\setcounter{section}{0}
\setcounter{equation}{0}
\setcounter{figure}{0}
\setcounter{table}{0}

\numberwithin{equation}{section}
\numberwithin{figure}{section}
\numberwithin{table}{section}

\begin{center}
{\Large\bfseries Supplementary Material for: Constructive realization of self-referential prediction limits in quantum control: Resource bounds and Gödel-safe architectures\par}
\vspace{1em}
Salman Sajad Wani\par
Qatar Center for Quantum Computing, College of Science and Engineering, Hamad Bin Khalifa University, Doha, Qatar\par
\vspace{0.5em}
\'Alvaro Perales-Eceiza\par
Computer Engineering Department, Polytechnic School, Universidad de Alcal\'a, 28805 Madrid, Spain\par
\vspace{0.5em}
Saif Al-Kuwari\par
Qatar Center for Quantum Computing, College of Science and Engineering, Hamad Bin Khalifa University, Doha, Qatar\par
\vspace{0.5em}
Mir Faizal\par
Canadian Quantum Research Center, 204-3002 32 Ave, Vernon, BC V1T 2L7, Canada\par
Irving K.\ Barber School of Arts and Sciences, University of British Columbia Okanagan, Kelowna, BC V1V 1V7, Canada\par
Department of Mathematical Sciences, Durham University, Upper Mountjoy, Stockton Road, Durham DH1 3LE, UK\par
Faculty of Sciences, Hasselt University, Agoralaan Gebouw D, Diepenbeek, 3590, Belgium\par
\end{center}

\section{Computational model and clocked predictors}
\label{sec:SMI}

This section specifies the reversible computational framework used throughout
the diagonal compilation. Its purpose is to fix, once and for all, the notions
of program description length, reversible time evolution, halting under a
deadline, and predictors whose outputs are guaranteed to appear by a prescribed clock time. All subsequent constructions in the Supplementary Material rely on this framework{. They also} inherit their resource accounting from it. Reversible logic admits a unitary embedding up to the terminal pointer readout{. This property provides} a convenient bookkeeping interface for later physical embedding; see, e.g., Bennett’s foundational work on reversible computation and uncomputation \cite{Bennett1973,Bennett1989}.

\runin{Codes and lengths}
We begin by fixing coding conventions used to quantify program size and to
synchronize time budgets with description length.
Let $\{0,1\}^*$ denote the set of finite binary strings.
For any finitely specified object $X$ (program, circuit, protocol), let
${X}\in\{0,1\}^*$ be a fixed binary code for $X$, and let $|{X}|$
denote its bit-length. This convention provides a uniform way to measure description complexity and to
define asymptotic resource bounds as functions of code length. 

\begin{definition}[Universal reversible machine]
\label{def:SMI-URM}
We fix a universal reversible Turing machine $\mathcal M_{\mathrm{rev}}$, as in
Bennett \cite{Bennett1973,Bennett1989}.
A program is a binary string interpreted as code for
$\mathcal M_{\mathrm{rev}}$.
The running time of a program is defined as the number of applications of the
reversible transition rule of $\mathcal M_{\mathrm{rev}}$ starting from the
standard initial configuration.
\end{definition}
This definition fixes a canonical notion of reversible time evolution against
which all later clock bounds and step counts are measured. Universality ensures
that all computable reversible procedures can be expressed within this single
model.

\runin{Scheduled halting convention}
To define a halting event compatible with reversibility, each program carries a
one-bit halt flag $h$ in its configuration, and a clock register stores the step
index $t$. The event ``halting at step $t$'' means that the predicate $h=1$ holds
at clock value $t$, together with a first-time condition defined in
Sec.~\ref{sec:SMII}. This scheduled halting convention preserves bijectivity of the step-to-step
evolution while enabling a verifiable halting signal {at the prescribed deadline}. The
explicit separation between the halt flag and the clock register is essential
for defining bounded reversible simulations in later sections.

\begin{definition}[Time-constructible budgets]
\label{def:SMI-budget}
A function $\tau:\mathbb N\to\mathbb N$ is time-constructible if there
exists a program $T_\tau$ that, on unary input $1^k$, runs for exactly $\tau(k)$
steps and outputs $\tau(k)$ in binary on a designated output tape.
\end{definition}
Time-constructibility ensures that the computational budget associated with a
given code length can itself be generated, verified, and aligned with the
reversible clock model. This property allows deadlines to be incorporated into
the computation without introducing external timing assumptions.

\begin{definition}[Clocked deterministic predictors]
\label{def:SMI-predictor}
Fix a time-constructible budget $\tau$.
A clocked deterministic predictor is a total computable map
$A:\{0,1\}^*\to\{0,1\}^{m}$ encoded as a program for $\mathcal M_{\mathrm{rev}}$
with code length $k:=|{A}|$, such that for every input
$\xi\in\{0,1\}^*$ the execution reaches a designated output register whose
contents equal $A(\xi)$ by step $\tau(k)$, and subsequently enters a reversible
idle cycle that leaves that register unchanged. We write $P_A(\xi):=A(\xi)$ and denote the first output bit by
$P_A^{(0)}(\xi)$.
In the diagonal construction we set $m:=k+1$ (one diagonal bit plus a $k$-bit
payload).
\end{definition}
This definition formalizes the notion of a predictor whose output is guaranteed
to be available by a fixed, length-dependent deadline and remains stable
thereafter. The explicit idle cycle ensures that output stability is compatible
with reversibility and can be checked at arbitrary later times.

\begin{definition}[Uniform payload bijections]
\label{def:SMI-pi}
Fix a family of bijections $\pi_k:\{0,1\}^k\to\{0,1\}^k$ with uniformly computable
inverses $\pi_k^{-1}$. Equivalently, there exists a single constant-size program
which, on inputs of length $k$, implements $\pi_k$ (or $\pi_k^{-1}$) using $k$
inferred from the input length.
Example: $\pi_k=\mathrm{id}$ for all $k$.
\end{definition}
These bijections are used to uniformly permute payload bits without affecting
the diagonal logic. Their role is purely informational: they preserve
invertibility while allowing the observed output record to encode auxiliary
data in a uniform and length-respecting manner.

\section{Bounded reversible simulation and the first-time halting bit}
\label{sec:SMII}

This section defines a uniform reversible gadget {for detecting first-time halting.
Given} a reversible program $Q$ and a step bound $t$, {the gadget} computes a single
classical bit indicating whether $Q$ raises its halt flag for the first time
exactly at step $t$. The purpose of this construction is to isolate the minimal
amount of bounded simulation required for the diagonal compilation. {The} gadget
serves as the (D) block in the diagonal construction{. It also} provides the
step-bounded simulation primitive used throughout the Supplementary Material.

\runin{Step-indexed halt flag}
For a program $Q$, let $h_s(Q)\in\{0,1\}$ denote the value of its halt flag after
$s$ reversible steps, starting from the standard initial configuration. This
notation allows the halting behavior of $Q$ to be treated as a time-indexed
Boolean sequence, which makes it possible to express ``first-time halting'' as a
single predicate depending on the entire execution history up to time $t$.

\begin{definition}[Bounded first-time halting predicate]
\label{def:SMII-halt}
For $({Q},t)\in\{0,1\}^*\times\mathbb N$, define
\begin{equation}
\label{eq:SMII-HALTdef}
\HALT({Q},t)
\;:=\;
h_t(Q)\,\prod_{s=0}^{t-1}\bigl(1-h_s(Q)\bigr)
\;\in\;\{0,1\}.
\end{equation}
Equivalently, $\HALT({Q},t)=1$ holds exactly when $h_t(Q)=1$ and
$h_s(Q)=0$ for every $s<t$.
\end{definition}
Equation~\eqref{eq:SMII-HALTdef} encodes the logical requirement that the halt flag
is raised at step $t$ and has not been raised at any earlier step. The product
over $s<t$ enforces the first-time condition, while the factor $h_t(Q)$ ensures
that halting occurs at the designated step.

\begin{lemma}[Uniform reversible circuit for $\HALT({Q},t)$]
\label{lem:SMII-FY}
There exists a uniform compiler that, on input $({Q},t)$, outputs a
logically reversible circuit $\mathcal C_{{Q},t}$ with the following
properties.

(i) On an all-zero workspace and a fresh output bit $H$, the circuit simulates
the first $t$ steps of $Q$, computes $H=\HALT({Q},t)$, and then reverses its
internal computation so that immediately before terminal measurement all
registers except $H$ return to their initial values.

(ii) Let $D$ and $W$ denote the logical depth and width of
$\mathcal C_{{Q},t}$.
There exist machine-dependent constants such that
\begin{equation}
\label{eq:SMII-DW}
D \;=\; O\!\bigl(t\log t + |{Q}|\bigr),
\qquad
W \;=\; O\!\bigl(t + |{Q}|\bigr).
\end{equation}
\end{lemma}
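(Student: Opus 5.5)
The construction follows Bennett's compute--copy--uncompute pattern, built on a clocked step-by-step simulation of $Q$ inside $\mathcal M_{\mathrm{rev}}$. First, the compiler writes the code ${Q}$ into a program register of width $|{Q}|$. It then initializes a simulated configuration of $Q$ (tape segment of length $O(t)$, head position, internal state, halt flag). Because at most $t$ steps are simulated, the head can visit only $O(t)$ cells, so a tape window of $O(t)$ cells suffices. The standard initial configuration depends only on ${Q}$ and costs $O(|{Q}|)$ depth to load. Next, the circuit applies a fixed reversible step gadget $G$ implementing one application of the transition rule of $Q$ on the simulated configuration. Uniformity holds because $G$ depends only on the universal machine, with ${Q}$ read from the program register. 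Since $Q$'s dynamics is itself reversible, $G$ is a bijection on configurations, and no garbage history is needed for the simulation step itself.

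To compute the first-time predicate of Definition~\ref{def:SMII-halt}, I will carry a single ancilla $a$ recording whether the flag has been seen. Initially $a=0$. After step $s<t$, a Toffoli-type update sets $a\leftarrow a\oplus(h_s\wedge\neg a)$. This map is not a bijection as written, so I will instead append a fresh bit $g_s:=h_s\wedge\neg a_{s}$ per step and update $a_{s+1}=a_s\oplus g_s$. This costs $t$ extra ancillas, still $O(t)$ width. After step $t$, I compute $H:=h_t\wedge\neg a_t$ into the fresh output bit with one Toffoli (with $\neg$ via conjugating NOTs). By construction $H=h_t\prod_{s<t}(1-h_s)=\HALT({Q},t)$. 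Then the circuit runs the entire forward block (steps, flag bookkeeping, initialization) in reverse. By the standard uncomputation argument of Bennett \cite{Bennett1973,Bennett1989}, every register other than $H$ returns to zero, and $H$ is untouched because no reversed gate targets it. This proves (i).

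For (ii), width is $O(t)$ tape cells, $O(t)$ bookkeeping bits, $O(\log t)$ clock/head-index bits, and $|{Q}|$ program bits, giving $W=O(t+|{Q}|)$. For depth, each simulated step must route the head to the addressed cell. I will use a binary head-position counter of $O(\log t)$ bits and a reversible multiplexer/demultiplexer tree over the $O(t)$-cell window of depth $O(\log t)$, so each step costs $O(\log t)$ depth. Over $t$ steps plus the mirrored uncomputation this gives $O(t\log t)$, plus $O(|{Q}|)$ for loading and unloading the program, matching Eq.~\eqref{eq:SMII-DW}. The main obstacle is exactly this per-step cost. A naive shift-register tape would cost $O(t)$ depth per step and yield $O(t^2)$. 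Also, reading the transition table of $Q$ from the program register could a priori cost $\Omega(|{Q}|)$ per step. I will handle the latter by observing that $\mathcal M_{\mathrm{rev}}$ is universal with a fixed transition rule. The program ${Q}$ is simply data on the simulated tape, so only the fixed rule of $\mathcal M_{\mathrm{rev}}$ appears in $G$, and $|{Q}|$ enters only additively through initialization. The constants hidden in $O(\cdot)$ are therefore the machine-dependent ones of $\mathcal M_{\mathrm{rev}}$ and the chosen gate set.
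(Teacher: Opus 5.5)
Your proposal is correct and takes essentially the same route as the paper. Both use a step-by-step reversible simulation for $t$ steps, with a fresh ancilla $g_s=h_s(1-e)$ per step feeding a running accumulator, then copy the first-time bit into $H$ and apply Bennett uncomputation to everything else. Your per-step $O(\log t)$ depth accounting, via a multiplexer over the $O(t)$-cell tape window with ${Q}$ treated as data for the fixed universal rule, is more explicit than the paper's attribution of the $\log t$ factor to clock updates and conditional control; and because you check $h_0$, you match Definition~\ref{def:SMII-halt} slightly more faithfully than the paper's loop over $s=1,\dots,t$.
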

This lemma guarantees that the first-time halting predicate defined in
Definition~\ref{def:SMII-halt} can be evaluated reversibly with explicit and
controlled resource overhead. The construction ensures that no irreversible
information about intermediate configurations remains after the output bit is
produced.

\begin{proof}
The circuit maintains a reversible simulation register for the configuration of
$Q$, a binary clock register of $\lceil\log_2(t{+}1)\rceil$ bits, a one-bit
accumulator $e$, and a sequence of ancillas $g_1,\dots,g_t$ initialized to $0$.
For each step $s=1,\dots,t$, it applies one reversible transition of $Q$ and
increments the clock. The clock register ensures that the simulation is aligned
with the step index used in the halting predicate.
Let $h$ be the halt flag in the current configuration. At each step, the circuit
computes $g_s:=h(1-e)$ into the ancilla $g_s$ by a reversible controlled write,
then updates $e:=e\oplus g_s$. By construction, the accumulator $e$ records
whether a halting event has already occurred. As a result, at most one $g_s$ can
equal $1$, and $g_s=1$ holds exactly at the first step where $h=1$.
Consequently, $g_t=1$ holds if and only if the first occurrence of $h=1$ is at
step $t$, which matches the predicate defined in
Eq.~\eqref{eq:SMII-HALTdef}. The circuit copies $g_t$ into the designated output
bit $H$ and then reverses the entire computation to clean the workspace, ensuring
that all ancillas and simulation registers are restored to their initial states.
The width bound follows from the workspace required to store the simulated
configuration over $t$ steps together with the program description overhead.
The depth bound includes a factor of $\log t$ arising from binary clock updates
and conditional control operations.
\end{proof}

\section{Budget-aligned diagonal compilation and the contradiction bit}
\label{sec:SMIII}

This section implements the diagonal step under explicit and verifiable
computational budgets. Fix a clocked deterministic predictor $A$
(Definition~\ref{def:SMI-predictor}) with code length
$k:=|{A}|$ and associated deadline $\tau(k)$. The goal is to construct a
finitely specified reversible experiment $\mathcal C_A$ whose recorded first
pointer bit is complementary to the predictor’s first output bit evaluated on
the experiment’s own description. All time bounds are aligned so that every subroutine completes before its
scheduled deadline. In particular, compilation overhead, predictor execution,
and the halting decision are separated into disjoint time windows.

\runin{Compilation overhead}
We fix a compilation pipeline that builds the bounded-halting gadget
$\mathcal C_{{Q},t}$ from Lemma~\ref{lem:SMII-FY} and marshals the resulting
circuit code as an input string to the predictor $A$. Let
${g_{\mathrm{comp}}}:\mathbb N^2\to\mathbb N$ be a time-constructible polynomial such that, given
$(x,k)$ and $t\ge 1$, the pipeline completes within ${g_{\mathrm{comp}}}(k,|x|)$ reversible steps.
Concretely, within this time bound the pipeline constructs
$s={\mathcal C_{{x},t}}$ and writes $s$ onto the input interface used to
invoke $A(s)$. This polynomial upper bound isolates compilation overhead from
predictor runtime and allows both contributions to be budgeted explicitly.

\begin{definition}[Diagonal deadline]
\label{def:SMIII-t0}
For $(x,k)\in\{0,1\}^*\times\mathbb N$, define
\begin{equation}
\label{eq:SMIII-t0}
t_0(x,k)\;:=\;{g_{\mathrm{comp}}}(k,|x|)\; +\;\tau(k)\; +\;1.
\end{equation}
\end{definition}
The additional unit step in Eq.~\eqref{eq:SMIII-t0} ensures a clean separation
between the completion of the predictor output and the scheduled halting event,
which is essential for enforcing a reversible {halting policy tied to the prescribed deadline}.

\begin{lemma}[Aligned diagonal program $D_A$]
\label{lem:SMIII-DA}
There exists a reversible program $D_A$ with code
$x^\ast:={D_A}$ such that, with $t_0:=t_0(x^\ast,k)$ and
$s:={\mathcal C_{{x^\ast},t_0}}$, the execution produces
\begin{equation}
\label{eq:SMIII-bdef}
b\;:=\;P_A^{(0)}(s)
\end{equation}
by step $t_0-1$, and the halt flag satisfies
\begin{equation}
\label{eq:SMIII-haltpolicy}
h_s(D_A)=0\ \text{for all}\ s<t_0,
\qquad
h_{t_0}(D_A)=1 \iff b=0.
\end{equation}
Consequently,
\begin{equation}
\label{eq:SMIII-bdiag}
\HALT({D_A},t_0)=1-P_A^{(0)}(s).
\end{equation}
\end{lemma}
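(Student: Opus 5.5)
The plan is to obtain $D_A$ as a fixed point of Kleene's second recursion theorem, applied to a total computable transformer $h$ built with the $s$-$m$-$n$ theorem. First I will define a two-argument reversible program $E(y,z)$ with $y$ a code parameter. On the standard initial configuration, $E$ proceeds as follows:
\begin{enumerate}
\item[(a)] It computes $k=|A|$ from a hardcoded copy of $A$, together with $t_0(y,k)$, using the time-constructible programs for $g_{\mathrm{comp}}$ and $\tau$. By Definition~\ref{def:SMI-budget} these have known step counts that can be absorbed into $g_{\mathrm{comp}}$.
\item[(b)] It runs the compilation pipeline to build $s_y:=\mathcal C_{y,t_0(y,k)}$ via the uniform compiler of Lemma~\ref{lem:SMII-FY}, finishing within $g_{\mathrm{comp}}(k,|y|)$ steps.
\item[(c)] It simulates $A$ on $s_y$ for $\tau(k)$ steps. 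By Definition~\ref{def:SMI-predictor} the first output bit $b=P_A^{(0)}(s_y)$ is stable from then on.
\item[(d)] It idles reversibly, with the clock running, until clock value $t_0-1$. Then, at step $t_0$, it sets $h:=1\oplus b$ by a controlled NOT that is conditioned on the clock equalling $t_0$, and leaves $h$ at $0$ otherwise.
\end{enumerate}
The $s$-$m$-$n$ theorem then gives a total computable $h(y):=s^1_1(e_E,y)$, namely the code of the unary program that hardcodes $y$. The recursion theorem yields $x^\ast$ such that $x^\ast$ and $h(x^\ast)$ compute the same function. I will take $D_A$ to be the program with code $x^\ast$.

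Next comes verification. Fix $y=x^\ast$. By construction, stages (a)--(c) finish by step $g_{\mathrm{comp}}(k,|x^\ast|)+\tau(k)=t_0-1$, so $b=P_A^{(0)}(s)$ is available by step $t_0-1$ with $s=\mathcal C_{x^\ast,t_0}$, as Eq.~\eqref{eq:SMIII-bdef} requires. The halt flag is written only at the single clock value $t_0$, so $h_{s'}(D_A)=0$ for all $s'<t_0$. At $t_0$ the flag is $1\oplus b$, so $h_{t_0}=1$ iff $b=0$, which is Eq.~\eqref{eq:SMIII-haltpolicy}. Equation~\eqref{eq:SMIII-bdiag} then follows from Definition~\ref{def:SMII-halt}: the product over $s'<t_0$ equals $1$, so $\HALT(D_A,t_0)=h_{t_0}=1-b$. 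Reversibility of every stage holds because each is a Bennett-style reversible simulation on $\mathcal M_{\mathrm{rev}}$ and the final flag write is a controlled XOR into a clean bit. For the same reason, I will keep garbage from stage (c) rather than erase it.

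The main obstacle is the mismatch between extensional and intensional fixed points together with timing. Kleene only guarantees that $x^\ast$ and $h(x^\ast)$ compute the same function, but the lemma speaks about step counts $h_s(D_A)$ of the code $x^\ast$ itself, which are intensional properties. I will first try to resolve this by invoking the recursion theorem in its quine form. Here the fixed point is constructed explicitly as $x^\ast=s^1_1(e_E,\,d)$ for a suitable $d$, so that running $x^\ast$ literally executes $E$ on a self-computed copy of $x^\ast$. The cost of reconstructing $x^\ast$ is a fixed code-dependent overhead, which I will fold into $g_{\mathrm{comp}}(k,|x^\ast|)$. This requires checking that $g_{\mathrm{comp}}$ was chosen generously enough, possibly by enlarging it by an additive polynomial term before fixing the pipeline, and that $|x^\ast|\le k+c_{\mathrm{fp}}$. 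There is a second, smaller point. Since $t_0$ depends on $|x^\ast|$, the deadline is computed inside the program from its own code. This creates no circularity, because $E$ computes $t_0(y,k)$ from its parameter $y$, and only afterwards is $y$ set to $x^\ast$.
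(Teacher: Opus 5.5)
Your proposal is essentially the paper's proof: a transformer (your $E$ together with $s$-$m$-$n$) whose machine computes $t_0$, builds $s=\mathcal C_{x,t_0}$, runs $A(s)$, and raises the halt flag exactly at $t_0$ iff $b=0$, closed by a Kleene fixed point. Your quine-form treatment of the intensional issue improves on the paper, which simply asserts that the fixed-point machine ``equals'' $M_{x^\ast}$, something the extensional recursion theorem does not by itself give.

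One accounting caution, which the paper also leaves implicit in its ``First, it computes $t_0$'': the time-constructible programs for $\tau$ and $g_{\mathrm{comp}}$ run for exactly $\tau(k)$ and $g_{\mathrm{comp}}(k,|y|)$ steps. Their cost therefore cannot be absorbed into $g_{\mathrm{comp}}$, and stages (a)--(c) as written would overrun $t_0-1$. Instead, evaluate $t_0$ by direct arithmetic, which is cheap for explicit polynomial budgets, and charge that cost to the pipeline budget.
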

This lemma encodes the diagonal logic itself: the scheduled halting decision of
$D_A$ is explicitly tied to the negation of the predictor’s first output bit,
while strictly respecting the global deadline $t_0$.

\begin{proof}
Define a total computable transformer $\Theta$ that maps $(x,{A})$ to the
code of a reversible machine $M_x$ with the following behavior. First, it
computes $t_0=t_0(x,k)$. Next, it constructs and marshals
$s={\mathcal C_{{x},t_0}}$ within ${g_{\mathrm{comp}}}(k,|x|)$ steps. It then runs $A(s)$
within $\tau(k)$ steps, obtaining $b=P_A^{(0)}(s)$ by step $t_0-1$.
Finally, $M_x$ follows a timed halting policy: the halt flag remains $0$ up to
time $t_0-1$ and equals $1$ at time $t_0$ exactly when $b=0$. This policy enforces
the diagonal condition without introducing irreversible control flow.
By Kleene’s recursion theorem \cite{Rogers1967}, there exists a fixed point
$x^\ast$ such that the resulting machine equals $M_{x^\ast}$. Defining $D_A$ by this fixed point yields
$\HALT({D_A},t_0)=1$ exactly when $b=0$, which gives
Eq.~\eqref{eq:SMIII-bdiag}.
\end{proof}

\begin{definition}[Compiled experiment $\mathcal C_A$ and pointer record]
\label{def:SMIII-CA}
Let $x^\ast:={D_A}$ and $t_0:=t_0(x^\ast,k)$ be as above. Define $\mathcal C_A$
as the reversible circuit that performs:

(i) (D) Run $\mathcal C_{{D_A},t_0}$ to compute the diagonal bit
\begin{equation}
\label{eq:SMIII-bdiag-def}
b_{\mathrm{diag}}:=\HALT({D_A},t_0)
\end{equation}
into a fresh pointer qubit.

(ii) (P) Write the payload
\begin{equation}
\label{eq:SMIII-Rdef}
R:=\pi_k({A})\in\{0,1\}^k
\end{equation}
into a clean $k$-qubit pointer register using reversible bit operations.

(iii) (U) Apply the inverse internal computation so that immediately before
terminal measurement only the $m=k+1$ pointer qubits carry nontrivial content.

Immediately before measurement the pointer register equals the computational
basis state
$\ket{b_{\mathrm{diag}}}\otimes\ket{R}$. The recorded classical outcome string is
\begin{equation}
\label{eq:SMIII-YA}
Y_A:=\bigl(b_{\mathrm{diag}},R\bigr)\in\{0,1\}^{k+1}.
\end{equation}
\end{definition}
This definition makes explicit which degrees of freedom are measured and which
are uncomputed, ensuring that the observed record is both classical and
information-preserving.

\begin{lemma}[Payload invertibility and injectivity on length-$k$ predictors]
\label{lem:SM-injective-payload}
Fix $k\in\mathbb N$ and let $A$ be any predictor with $|{A}|=k$. In the
payload variant of Definition~\ref{def:SMIII-CA}, the predictor code is recovered
from the observed record by ${A}=\pi_k^{-1}(R)$. Hence the map
$\mathcal F_k:{A}\mapsto Y_A$ is injective on the set of length-$k$ codes.
\end{lemma}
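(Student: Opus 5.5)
The plan is to read the record directly off the terminal pointer state. By Definition~\ref{def:SMIII-CA}, immediately before measurement the pointer register of $\mathcal C_A$ is the computational-basis product state $\ket{b_{\mathrm{diag}}}\otimes\ket{R}$ with $R=\pi_k(A)$, all other registers having been uncomputed to blank. Lemma~\ref{lem:SMII-FY}(i) guarantees this cleanliness. The terminal measurement is in the computational basis and acts on a basis state, so it returns $Y_A=(b_{\mathrm{diag}},R)$ deterministically. Hence the last $k$ bits of the observed record are exactly $R=\pi_k(A)$.

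Invertibility then follows by applying $\pi_k^{-1}$. By Definition~\ref{def:SMI-pi}, $\pi_k$ is a bijection on $\{0,1\}^k$ with a uniformly computable inverse. The length $k$ is itself recoverable from the record, since $|Y_A|=k+1$. So one strips the first bit and applies $\pi_k^{-1}$ to obtain $\pi_k^{-1}(R)=A$. This is a computable decoding of the record that does not depend on $A$.

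Injectivity of $\mathcal F_k$ is immediate from the existence of this left inverse. Suppose $|A|=|A'|=k$ and $Y_A=Y_{A'}$. Their payload components agree, $\pi_k(A)=\pi_k(A')$, and injectivity of $\pi_k$ gives $A=A'$. The first bit $b_{\mathrm{diag}}$ plays no role here, so we never need to know its value relative to the predictor's forecast.

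The only real obstacle is justifying that the record is deterministically $(b_{\mathrm{diag}},R)$. This means checking that the payload writer (P) is not disturbed by the uncomputation (U). The argument I would try first is that (P) acts only on a fresh register disjoint from the workspace. Since (P) is a reversible write controlled on the fixed code $A$, and (U) inverts only the internal computation of stage (D) while leaving the pointer registers untouched, $R$ survives unchanged.
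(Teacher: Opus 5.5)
Your proof is correct and matches the paper's: the paper likewise reads $R=\pi_k(A)$ off the last $k$ bits of $Y_A$ via Definition~\ref{def:SMIII-CA}, applies $\pi_k^{-1}$ to recover $A$, and concludes injectivity. Your extra argument that the payload survives the uncomputation stage just spells out what the paper already builds into that definition (only the $k+1$ pointer qubits carry content before measurement), so it is harmless but not needed.
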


\begin{proof}
By definition, the last $k$ bits of $Y_A$ equal $R=\pi_k({A})$. Applying the
inverse bijection yields $\pi_k^{-1}(R)={A}$.
\end{proof}

\begin{remark}[Role of the payload]
\label{rem:SM-payload-role}
The diagonal contradiction uses only the first pointer bit
$b_{\mathrm{diag}}$. The payload $R$ provides an explicit,
information-preserving record of the predictor description in the observed
output. Its role is auxiliary: it enforces injectivity without affecting the
diagonal logic itself.
\end{remark}

\begin{lemma}[Diagonal contradiction on the first output bit]
\label{lem:SMIII-contradict}
Let $\mathcal C_A$ and $Y_A$ be as in Definition~\ref{def:SMIII-CA}. Then
\begin{equation}
\label{eq:SMIII-contradict-eq}
b_{\mathrm{diag}}
\;=\;
1-P_A^{(0)}\!\bigl({\mathcal C_A}\bigr),
\end{equation}
hence $A({\mathcal C_A})\neq Y_A$.
\end{lemma}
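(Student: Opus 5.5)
The plan is to chain together the two earlier results that already carry the logic: Lemma~\ref{lem:SMII-FY}, which makes the (D) block compute $\HALT$ correctly, and Lemma~\ref{lem:SMIII-DA}, which ties that halting bit to the negated forecast. The only genuinely new point is identifying the string on which $A$ is evaluated inside $D_A$ with the description $\mathcal C_A$ of the compiled experiment.

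\emph{Step 1 (correctness of the pointer bit).} Stage (i) of Definition~\ref{def:SMIII-CA} runs the uniform gadget $\mathcal C_{D_A,t_0}$. By Lemma~\ref{lem:SMII-FY}(i) it writes $H=\HALT(D_A,t_0)$ into the fresh pointer qubit and restores all other registers. Stage (ii) acts only on the disjoint $k$-qubit payload register. Stage (iii) inverts only internal computation and never touches the pointer qubits. So immediately before readout the pointer qubit is the basis state $\ket{\HALT(D_A,t_0)}$, and the measured first bit is deterministically $b_{\mathrm{diag}}=\HALT(D_A,t_0)$.

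\emph{Step 2 (diagonal identity).} Equation~\eqref{eq:SMIII-bdiag} of Lemma~\ref{lem:SMIII-DA} gives $\HALT(D_A,t_0)=1-P_A^{(0)}(s)$ with $s=\mathcal C_{x^\ast,t_0}$. The remaining task is to show $P_A^{(0)}(s)=P_A^{(0)}(\mathcal C_A)$, i.e.\ that the string marshalled to $A$ inside $D_A$ is the code of the full experiment.

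\emph{Step 3 (the main obstacle: matching $s$ with the code of $\mathcal C_A$).} I expect this to be the hardest step. As literally stated, $s$ is the code of the (D) block alone, whereas $\mathcal C_A$ also contains (P) and (U). I would handle this by strengthening the transformer $\Theta$ in the proof of Lemma~\ref{lem:SMIII-DA}. The code of the full circuit $\mathcal C_A$ is obtained from $(x^\ast,t_0,A)$ by a fixed total computable map: append the payload writer for $\pi_k(A)$, using the uniform program of Definition~\ref{def:SMI-pi}, and the standard inverse block. So I would let $M_x$ marshal this complete string $\sigma(x)$ instead of $\mathcal C_{x,t_0}$. This costs only a polynomial increase in $g_{\mathrm{comp}}$, which is still time-constructible, so the deadline accounting of Definition~\ref{def:SMIII-t0} is unchanged. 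Kleene's recursion theorem then applies verbatim and yields a fixed point $x^\ast$ with $\sigma(x^\ast)=\mathcal C_A$. Alternatively, one can adopt the encoding convention that the (D) gadget code together with $k$ determines $\mathcal C_A$ canonically; then $A$ may be regarded as acting on that canonical code, and the identity holds by convention.

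\emph{Step 4 (conclusion).} With $s=\mathcal C_A$, Steps 1--2 give $b_{\mathrm{diag}}=1-P_A^{(0)}(\mathcal C_A)$. Since the first bit of $Y_A$ is $b_{\mathrm{diag}}$ and the first bit of $A(\mathcal C_A)$ is $P_A^{(0)}(\mathcal C_A)$, these two bits differ. Hence the $m$-bit strings differ, that is, $A(\mathcal C_A)\neq Y_A$.
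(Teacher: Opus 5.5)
Your proposal is correct and follows the paper's own proof. Both arguments chain together four steps: $b_{\mathrm{diag}}=\HALT(D_A,t_0)$ from Definition~\ref{def:SMIII-CA}, the identity $\HALT(D_A,t_0)=1-P_A^{(0)}(s)$ from Lemma~\ref{lem:SMIII-DA}, the fixed-point identification of $s$ with the code of $\mathcal C_A$, and the first-bit comparison. The paper simply asserts that identification as part of the choice of fixed point. You correctly note that $s=\mathcal C_{x^\ast,t_0}$ literally encodes only the (D) block, and your repair makes the step rigorous: $M_x$ marshals the full code $\sigma(x)$ of the (D)+(P)+(U) circuit, at polynomial extra cost in $g_{\mathrm{comp}}$, before the recursion theorem is applied.
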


\begin{proof}
Definition~\ref{def:SMIII-CA} gives
$b_{\mathrm{diag}}=\HALT({D_A},t_0)$. Lemma~\ref{lem:SMIII-DA} gives
$\HALT({D_A},t_0)=1-P_A^{(0)}(s)$ with
$s={\mathcal C_{{x^\ast},t_0}}$. The fixed point
$x^\ast={D_A}$ is chosen so that the pipeline-generated code string $s$
equals the actual description ${\mathcal C_A}$ of the compiled circuit
using $(D_A,t_0)$ in its (D) block.
Substituting yields Eq.~\eqref{eq:SMIII-contradict-eq}. Since the first bit of
$Y_A$ equals $b_{\mathrm{diag}}$ while the first bit of
$A({\mathcal C_A})$ equals
$P_A^{(0)}({\mathcal C_A})$, the full strings differ.
\end{proof}

\section{Mach--Zehnder realization: timing and one-shot error envelopes}
\label{sec:SMIV}

{
This section describes a feed-forward implementation of the diagonal bit
$b_{\mathrm{diag}}$ using a single-photon Mach--Zehnder interferometer (MZI).
The section has two purposes: first, to demonstrate that the diagonal bit can
be implemented as a binary interferometric pointer {whose phase is controlled, using}
standard linear optics; and second, to provide explicit causal timing
conditions and one-shot error envelopes that bound the probability of an
incorrect pointer outcome in a single experimental run.}

{\runin\runin{Ideal interferometric pointer}}
Consider an MZI with balanced $50{:}50$ beam splitters BS$_1$, BS$_2$, a phase
shifter on the upper arm, and threshold detectors $D_0,D_1$ at the outputs. The
path degree of freedom is treated as a qubit with computational basis
$\{\ket{0},\ket{1}\}$, where the phase shifter applies the unitary transformation
$\ket{1}\mapsto e^{i\phi}\ket{1}$. Under the standard convention \cite{BornWolf,SalehTeich}, a single photon entering
BS$_1$ prepares the path superposition
$\ket{\psi_{\mathrm{path}}}=(\ket{0}+\ket{1})/\sqrt 2$. After recombination at
BS$_2$, the ideal click probabilities at the output detectors satisfy
\cite{BornWolf,SalehTeich}
\begin{equation}
\label{eq:SMIV-fringes}
\Pr(D_0\mid\phi)=\cos^2\!\frac{\phi}{2},
\qquad
\Pr(D_1\mid\phi)=\sin^2\!\frac{\phi}{2}.
\end{equation}
We encode the classical detector outcome by setting $X=0$ for a click at $D_0$
and $X=1$ for a click at $D_1$. {Equation~\eqref{eq:SMIV-fringes} specifies the ideal phase-to-port probability
map from the phase setting $\phi$ to the detection probabilities for the
classical binary pointer.}

\runin{Actuation rule}
The diagonal controller outputs the bit $b_{\mathrm{diag}}\in\{0,1\}$
(Sec.~\ref{sec:SMIII}) and commands the phase shifter according to
\begin{equation}
\label{eq:SMIV-phase-rule}
\phi=\pi\,b_{\mathrm{diag}}.
\end{equation}
In the ideal model, substituting Eq.~\eqref{eq:SMIV-phase-rule} into
Eq.~\eqref{eq:SMIV-fringes} yields $X=b_{\mathrm{diag}}$ with unit probability.
{Thus, in the absence of noise and timing violations, the MZI serves as a
faithful binary interferometric pointer {whose phase is controlled by} the diagonal bit.}

\runin{Timing constraint}
We now specify the causal timing condition required for correct feed-forward
operation. Let $T_{\mathrm{ctrl}}$ denote the worst-case time required to marshal
${\mathcal C_A}$, evaluate the predictor call, compute
$b_{\mathrm{diag}}$, and issue the phase command. Let $T_\pi$ denote the
phase-modulator settling time, and let $T_{\mathrm{flight}}$ denote the photon
time-of-flight through the controlled arm, including any deliberate delay. A sufficient causal-alignment condition is
\begin{equation}
\label{eq:SMIV-timing}
T_{\mathrm{flight}} \ \ge\ T_{\mathrm{ctrl}}+T_\pi+\Delta,
\end{equation}
where $\Delta>0$ is a fixed safety margin. This inequality ensures that the phase
setting determined by the diagonal computation is applied before the photon
reaches the modulated arm. In a digital controller model with step time $T_{\mathrm{step}}$, one may bound
$T_{\mathrm{ctrl}}\le t_0\,T_{\mathrm{step}}$ using the diagonal deadline $t_0$
from Definition~\ref{def:SMIII-t0}. This connects the causal timing requirement
directly to the computational budget.

\runin{Noise model and envelopes}
We next introduce a simple noise model sufficient to bound the one-shot error
probability. We assume matched detectors with end-to-end detection efficiency
$\eta\in(0,1]$, a per-port background click probability $d\ll 1$ per detection
gate, and a postselection rule that discards double-click events. Imperfect interference is summarized by a fringe visibility $V\in[0,1]$ and a
phase-setting error $\epsilon$ about the target values
$\phi\in\{0,\pi\}$, with $|\epsilon|\le\delta\phi$. Under these assumptions, a
standard visibility model replaces Eq.~\eqref{eq:SMIV-fringes} by
\cite{SalehTeich}
\begin{equation}
\label{eq:SMIV-vis}
\Pr(D_0\mid\phi)=\frac{1+V\cos\phi}{2},
\qquad
\Pr(D_1\mid\phi)=\frac{1-V\cos\phi}{2}.
\end{equation}
Equation~\eqref{eq:SMIV-vis} captures the combined effect of imperfect mode
overlap and residual dephasing on the output statistics.

\begin{lemma}[Signal wrong-port probability]
\label{lem:SMIV-q}
Under Eq.~\eqref{eq:SMIV-vis} and $|\epsilon|\le\delta\phi$,
\begin{equation}
\label{eq:SMIV-qdef}
\Pr[X\neq b_{\mathrm{diag}}\mid \text{signal click}]
\ \le\
q
\ :=\
\frac{1-V}{2}+\sin^2\!\frac{\delta\phi}{2}.
\end{equation}
\end{lemma}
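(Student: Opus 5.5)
The plan is to treat the two target phases separately and bound the wrong-port probability in each case by a visibility term plus a phase-error term. Conditioned on a signal click, the wrong-port event is $D_1$ when $b_{\mathrm{diag}}=0$ (target $\phi=0$) and $D_0$ when $b_{\mathrm{diag}}=1$ (target $\phi=\pi$). The realized phase is $\phi=\pi b_{\mathrm{diag}}+\epsilon$ with $|\epsilon|\le\delta\phi$. Substituting into Eq.~\eqref{eq:SMIV-vis}, both cases give the same expression, because $\cos(\pi+\epsilon)=-\cos\epsilon$:
\begin{equation*}
\Pr[X\neq b_{\mathrm{diag}}\mid \text{signal click}]=\frac{1-V\cos\epsilon}{2}.
\end{equation*}
So it suffices to bound $(1-V\cos\epsilon)/2$ uniformly over $|\epsilon|\le\delta\phi$.

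For that bound, I would split
\begin{equation*}
\frac{1-V\cos\epsilon}{2}=\frac{1-V}{2}+\frac{V(1-\cos\epsilon)}{2}
\end{equation*}
and use $1-\cos\epsilon=2\sin^2(\epsilon/2)$. Since $V\in[0,1]$, the second term is at most $\sin^2(\epsilon/2)$. When $|\epsilon|\le\delta\phi\le\pi$, monotonicity of $\sin^2(x/2)$ on $[0,\pi]$ gives $\sin^2(\epsilon/2)\le\sin^2(\delta\phi/2)$. This yields $q$ exactly, as stated in the lemma.

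The only point needing care is the range of $\delta\phi$. If $\delta\phi>\pi$, monotonicity fails, and I would handle that case directly. Since $V\cos\epsilon\ge -1$, we always have $(1-V\cos\epsilon)/2\le 1$. For $\delta\phi\in[\pi,2\pi]$, the right-hand side satisfies $q\ge\sin^2(\delta\phi/2)\ge 0$, but it need not exceed $1/2$. Rather than track this case, I would simply state the natural physical assumption $\delta\phi\le\pi$, which is implicit in phase errors about the targets $\{0,\pi\}$. The bound then holds for every realization of $\epsilon$, and hence also after averaging over any distribution of $\epsilon$ supported in $[-\delta\phi,\delta\phi]$.
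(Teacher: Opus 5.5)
Your proposal is correct and follows the paper's proof essentially verbatim: both targets reduce to the wrong-port probability $(1-V\cos\epsilon)/2$, which you rewrite as $(1-V)/2+V\sin^2(\epsilon/2)$ and bound using $V\le 1$ and $|\epsilon|\le\delta\phi$. Your explicit restriction $\delta\phi\le\pi$ is a worthwhile addition, because the paper uses the monotonicity step without stating this assumption, and the bound fails otherwise (e.g.\ $V=1$, $\delta\phi=3\pi/2$, $\epsilon=\pi$ gives wrong-port probability $1$ but $q=1/2$).
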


\begin{proof}
With target phase $\phi=\pi b_{\mathrm{diag}}$ and realized phase
$\phi+\epsilon$, the probability of a wrong-port signal click equals
$(1-V\cos\epsilon)/2$. Using $\cos\epsilon=1-2\sin^2(\epsilon/2)$ yields
$(1-V)/2+V\sin^2(\epsilon/2)\le (1-V)/2+\sin^2(\delta\phi/2)$, which establishes
the bound.
\end{proof}

Let $\mathrm{acc}$ denote the event that exactly one detector clicks in the
detection gate.

\begin{lemma}[Postselected one-shot envelope]
\label{lem:SMIV-post}
To first order in $d$,
\begin{equation}
\label{eq:SMIV-post}
\Pr[X\neq b_{\mathrm{diag}}\mid \mathrm{acc}]
\ \le\
q+\frac{1-\eta}{\eta}\,d.
\end{equation}
\end{lemma}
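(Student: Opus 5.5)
The plan is to compute the postselected error probability exactly, to first order in $d$, by listing every single-click event in one gate. Conditioning on a signal photon being present or absent, and then on which port it would hit, gives the terms directly. I model the signal as follows: with probability $\eta$ the photon is detected, and given detection it lands in the wrong port with probability $q' \le q$ (Lemma~\ref{lem:SMIV-q}). With probability $1-\eta$ it is lost. On top of this, each detector independently fires a background click with probability $d$. Because the detectors are matched, both $\eta$ and $d$ are port-independent, so the only asymmetry comes from the signal port.

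Next I split the event $\mathrm{acc}$ into its first-order contributions. If the signal is detected at the correct port and the other port stays dark, the probability is $\eta(1-q')(1-d)$, and this is a correct record. If the signal is detected at the wrong port and the correct port stays dark, the probability is $\eta q'(1-d)$, and this is an error. If the signal is lost and exactly one background click occurs, the probability is $(1-\eta)\,2d(1-d)$, split equally between the two ports, so $(1-\eta)d(1-d)$ is the error part. Background clicks that coincide with a signal click produce double clicks and are discarded. Clicks that land in the same port as the signal do not change $X$ for threshold detectors, so they only rescale factors of $(1-d)$.

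Expanding to first order in $d$ gives $\Pr[\mathrm{acc}] = \eta + 2(1-\eta)d - \eta d + O(d^2)$ and $\Pr[\text{error}\wedge\mathrm{acc}] = \eta q' + (1-\eta)d + O(d^2)$. The ratio is then $q'(1+O(d)) + \frac{1-\eta}{\eta}d + O(d^2)$.

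The delicate step is the cross term $q'\cdot O(d)$. Its sign must be controlled so that the claimed bound $q+\frac{1-\eta}{\eta}d$ is not exceeded at first order. The denominator correction is $d(2(1-\eta)-\eta)/\eta$, which can be positive. In that case the cross term enters as $-q'd(\cdot)$ and can only help. When it is negative, the resulting term is of order $qd$. I would argue that this is second order in the small quantities $q$ and $d$, consistent with the statement's "leading order" qualifier; alternatively, it can be absorbed by keeping the exact numerator, in which the error-event factor $(1-d)$ compensates. Once the order-$qd$ terms are dropped under the leading-order convention, $q'\le q$ gives Eq.~\eqref{eq:SMIV-post}.
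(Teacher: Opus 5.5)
Your decomposition of the accepted events is the same as the paper's: a signal click with wrong-port probability $q'\le q$, versus a lost photon plus exactly one background click. Your exact event probabilities are also correct. The gap is in the expansion. Since $\eta q'(1-d)=\eta q'-\eta q'd$, the correct first-order numerator is
\[
\Pr[\text{error}\wedge\mathrm{acc}]=\eta q'+(1-\eta)d-\eta q'd+O(d^2).
\]
The term $-\eta q'd$ is first order in $d$, not $O(d^2)$. Dropping it is what creates your ``delicate'' positive cross term $+q'(3\eta-2)d/\eta$ when $\eta>2/3$. Your proposed patch, discarding $qd$ terms as second order in $q$ and $d$, is not licensed by the statement. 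The lemma expands in $d$ alone, with $q$ fixed, and the visibility and phase errors behind $q$ need not be small. That route would therefore prove only a weaker, doubly perturbative claim, and it rests on the faulty numerator anyway.

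Once the dropped term is restored, the first-order cross term is $-2(1-\eta)q'd/\eta\le 0$ for every $\eta$, so no case split is needed. Your ``alternatively'' clause is the right fix, and it makes the result exact. Every accepted event carries the same factor $(1-d)$, because the other port must stay dark. That factor cancels, giving
\[
\Pr[X\neq b_{\mathrm{diag}}\mid\mathrm{acc}]
=\frac{\eta q'+(1-\eta)d}{\eta+2(1-\eta)d}
\le q'+\frac{1-\eta}{\eta}\,d
\le q+\frac{1-\eta}{\eta}\,d .
\]
This holds for all $d$, since the denominator is at least $\eta$. Equivalently, you can condition on the source of the accepted event:
\begin{itemize}
\item accepted events caused by the signal err with probability exactly $q'$;
\item accepted events caused by background err with probability $1/2$;
\item the background weight is $2(1-\eta)d/(\eta+2(1-\eta)d)\le 2(1-\eta)d/\eta$.
\end{itemize}
This is the mixture logic behind the paper's terse proof. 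Written this way, it avoids separate numerator and denominator expansions and their cross terms altogether. It also justifies the paper's step of replacing the denominator by $\eta$, which without the cancellation fails at first order when $\eta>2/3$.
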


\begin{proof}
Accepted events {have two possible sources. A} registered signal click {occurs} with
probability $\eta+O(d)${, and its} error probability is bounded by $q$
(Lemma~\ref{lem:SMIV-q}){. Alternatively, a signal event is missed (with}
probability $1-\eta$) {and} a single background click {occurs}. To first order in
$d$, the latter contributes at most $(1-\eta)d$ to the numerator and $\eta$ to
the denominator, yielding Eq.~\eqref{eq:SMIV-post}.
\end{proof}

\begin{lemma}[Unconditioned one-shot envelope]
\label{lem:SMIV-unpost}
To first order in $d$,
\begin{equation}
\label{eq:SMIV-unpost}
\Pr[X\neq b_{\mathrm{diag}}]
\ \le\
q+d.
\end{equation}
\end{lemma}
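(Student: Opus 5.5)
The plan is to decompose the error event $\{X\neq b_{\mathrm{diag}}\}$ without postselection, according to which detectors click in the gate. There are three exhaustive cases: exactly one click from a genuine signal detection, a click pattern generated by background alone (the signal photon lost), and a double click. Because no events are discarded, the unconditioned probability is a plain sum over these cases. No denominator needs renormalizing, which is why the loss factor $(1-\eta)/\eta$ that appeared in Lemma~\ref{lem:SMIV-post} should drop out.

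First I fix a convention for the record when the gate is not a single-click event. For no-click or double-click gates, I assign $X$ by a fixed rule, for example $X$ uniform or $X$ set to the port of the background click. This matters only at order $d$ and below. I then write
\[
\Pr[X\neq b_{\mathrm{diag}}]\le \Pr[\text{signal detected}]\cdot q+\Pr[\text{some background click}]+O(d^2).
\]
The signal is detected with probability at most $\eta\le 1$. Conditioned on detection, the wrong-port probability is at most $q$ by Lemma~\ref{lem:SMIV-q}, so the first term is at most $\eta q\le q$.

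For the background contribution, the only first-order way to produce a wrong bit is a spurious click in the wrong port. With matched detectors this happens with probability $d$ per port, independently of the signal. The right-port spurious click cannot create an error, and events with two background clicks are $O(d^2)$. The wrong-port background probability is therefore at most $d$. Summing the two contributions gives $q+d$ to first order.

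The main obstacle is the bookkeeping for the no-click and double-click events, since the lemma does not fix how $X$ is recorded in those cases. If a no-click gate with no background were assigned $X$ at random, it would contribute a term $(1-\eta)/2$ that is not $O(d)$, and the bound would fail. I would therefore adopt the convention implicit in the main text. Under that convention, an error means a registered click at the wrong detector, and no-click gates yield no recorded wrong bit. Double clicks (signal plus background) are counted as errors, which bounds their contribution by the wrong-port background probability $d$ already included. With that convention fixed, the remaining estimates are routine.
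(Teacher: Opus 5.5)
Your proposal is correct and is essentially the paper's argument: the error event is covered by a detected signal click routed to the wrong port (probability at most $\eta q\le q$ by Lemma~\ref{lem:SMIV-q}) together with a background click in the wrong port (probability at most $d$), and the bound is the sum of the two. Your explicit convention, under which no-click gates record no wrong bit and double clicks count as errors, makes precise what the paper's one-line proof leaves implicit. Under that convention the union bound gives $\eta q+d$ directly, with no first-order truncation needed.
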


\begin{proof}
An error arises either from optical misrouting of a signal click, which is
bounded by $q$, or from a wrong-port background click, which contributes at most
$d$ to first order. Summing these contributions yields the stated bound.
\end{proof}
\section{Resource scaling and a fault-tolerant embedding}
\label{sec:SMV}

This section tracks the logical and physical resources required to implement the
compiled experiment $\mathcal C_A$. Its purpose is to make explicit how circuit
depth, width, and reliability scale with the predictor code length $k$, and to
record a standard fault-tolerant embedding that justifies the main-text claims
regarding polynomial overhead and scalable reliability.

\runin{Logical depth and width}
Let $x^\ast:={D_A}$ and $t_0:=t_0(x^\ast,k)$ be as defined in
Sec.~\ref{sec:SMIII}. The (D) block of $\mathcal C_A$ is the bounded reversible
simulation circuit $\mathcal C_{{D_A},t_0}$ from
Lemma~\ref{lem:SMII-FY}. Let $D_{\log}$ and $W_{\log}$ denote the logical depth and
width of $\mathcal C_A$, excluding the $m=k+1$ pointer qubits that carry the
measured classical record. Lemma~\ref{lem:SMII-FY} directly yields the following asymptotic bounds on the
logical resources:
\begin{equation}
\label{eq:SMV-DW-general}
D_{\log}
=\;O\!\bigl(t_0\log t_0+|x^\ast|+k\bigr),
\qquad
W_{\log}
=\;O\!\bigl(t_0+|x^\ast|+k\bigr).
\end{equation}
These expressions separate the contribution of the bounded simulation horizon
$t_0$ from the additive overhead associated with program descriptions and payload
handling. No other sources of asymptotic growth appear in the compiled circuit.

\begin{lemma}[Fixed-point code length]
\label{lem:SMV-fp}
For the fixed diagonalization pipeline of Sec.~\ref{sec:SMIII}, there exists a
constant $c_{\mathrm{fp}}$ independent of $A$ such that
\begin{equation}
\label{eq:SMV-fp}
|x^\ast|\le k+c_{\mathrm{fp}}.
\end{equation}
\end{lemma}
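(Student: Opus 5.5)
The bound $|x^\ast|\le k+c_{\mathrm{fp}}$ should follow from the \emph{effective} form of Kleene's recursion theorem, not just the bare existence of a fixed point. In the proof of Lemma~\ref{lem:SMIII-DA}, the fixed point $x^\ast$ is obtained from the total computable transformer $\Theta$, with $A$ entering only as a parameter. I would make this parametrization explicit. Apply the $s$-$m$-$n$ theorem to the binary map $(x,{A})\mapsto \Theta(x,{A})$, and then use the parametrized recursion theorem. This gives a total computable function $f$ such that $x^\ast=f({A})$ is a fixed point for every $A$.

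The concrete construction I would use is the standard one. Fix once and for all a program $E$, independent of $A$, that does the following on a reversible configuration storing a parameter string $a$:
\begin{enumerate}
\item reconstruct its own full code $e=\langle E\rangle\cdot a$ by quining; that is, $E$ carries a copy of its own text and concatenates $a$;
\item run the behaviour prescribed by $\Theta(e,a)$, namely compute $t_0(e,k)$ with $k=|a|$, build and marshal $s=\mathcal C_{e,t_0}$, run $A=a$ on $s$, and apply the timed halting policy.
\end{enumerate}
Then $x^\ast:=\langle E\rangle\cdot \mathrm{pad}({A})$ works. Here $\mathrm{pad}$ is a fixed self-delimiting encoding, and the recursion-theorem identity holds because the machine coded by $x^\ast$ literally behaves as $M_{x^\ast}$. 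The length is $|x^\ast|=|\langle E\rangle|+|{A}|+O(1)$. So $c_{\mathrm{fp}}:=|\langle E\rangle|+O(1)$, which depends only on the fixed pipeline ($\mathcal M_{\mathrm{rev}}$, $g_{\mathrm{comp}}$, $\tau$, the gadget compiler of Lemma~\ref{lem:SMII-FY}), and not on $A$.

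The main obstacle is the encoding overhead. A naive self-delimiting code for ${A}$ costs $|{A}|+O(\log|{A}|)$ bits, not $|{A}|+O(1)$. I would avoid this by placing ${A}$ as the final segment of $x^\ast$, so its length is inferred from the total code length, as in Definition~\ref{def:SMI-pi}, where $k$ is read from the input length. This makes the overhead a true constant.

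A second subtlety is that $\tau$ and $g_{\mathrm{comp}}$ must be hardwired into $E$ rather than passed as data. This is legitimate because they are fixed for the pipeline. The budget $\tau(k)$ itself is computed from $k=|{A}|$ via the time-constructibility program $T_\tau$ of Definition~\ref{def:SMI-budget}, whose code is also a constant.

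Finally, I would check that the quine step is compatible with reversibility. Copying $a$ and the stored $\langle E\rangle$ into a work register is a reversible controlled-copy onto blank ancillas, so it needs no extra description length. With that, the bound $|x^\ast|\le k+c_{\mathrm{fp}}$ follows.
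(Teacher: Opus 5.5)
Your proposal is correct and follows the paper's own argument: the paper's proof says only that the fixed point is a constant-size wrapper (the fixed transformer plus the self-reference mechanism) together with a hardwired copy of ${A}$, so the overhead beyond $k$ is an additive constant. Your explicit quine $x^\ast=\langle E\rangle\cdot{A}$ spells out the coding convention on $\mathcal M_{\mathrm{rev}}$ that this one-line proof leaves implicit. The self-delimiting burden should sit on the header $\langle E\rangle$, and ${A}$ should be appended as a raw trailing segment whose length is read off from $|x^\ast|$; a self-delimiting $\mathrm{pad}({A})$ would cost an extra $O(\log k)$ bits, and your first paragraph should drop that description accordingly.
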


\begin{proof}
The fixed point consists of a constant-size wrapper implementing the fixed
transformer and self-reference mechanism together with a hardwired copy of
${A}$. This introduces only an additive constant overhead beyond $k$
\cite{Rogers1967}.
\end{proof}
This bound ensures that the self-referential construction does not introduce
superlinear growth in description length.

\begin{corollary}[Polynomial deadline under polynomial predictor budgets]
\label{cor:SMV-t0}
Define ${\bar g_{\mathrm{comp}}}(k):={g_{\mathrm{comp}}}(k,k+c_{\mathrm{fp}})$. Then
\begin{equation}
\label{eq:SMV-t0}
t_0 \le \tau(k)+{\bar g_{\mathrm{comp}}}(k)+1.
\end{equation}
If $\tau(k)$ is polynomial, then $t_0$, $D_{\log}$, and $W_{\log}$ are polynomial
functions of $k$.
\end{corollary}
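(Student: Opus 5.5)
The deadline bound follows directly from Definition~\ref{def:SMIII-t0} combined with Lemma~\ref{lem:SMV-fp}. By definition, $t_0=t_0(x^\ast,k)=g_{\mathrm{comp}}(k,|x^\ast|)+\tau(k)+1$. I will first assume, or arrange at the outset, that the fixed compilation polynomial $g_{\mathrm{comp}}$ is monotone nondecreasing in its second argument. This costs nothing: any polynomial upper bound on the pipeline time can be replaced by one with nonnegative coefficients, which is still time-constructible and still bounds the pipeline. Lemma~\ref{lem:SMV-fp} gives $|x^\ast|\le k+c_{\mathrm{fp}}$, with $c_{\mathrm{fp}}$ independent of $A$. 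Monotonicity then yields $g_{\mathrm{comp}}(k,|x^\ast|)\le g_{\mathrm{comp}}(k,k+c_{\mathrm{fp}})=\bar g_{\mathrm{comp}}(k)$, and substituting gives Eq.~\eqref{eq:SMV-t0}.

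For the polynomial claims, suppose $\tau(k)$ is polynomial. Then $\bar g_{\mathrm{comp}}(k)$ is a polynomial evaluated at $(k,k+c_{\mathrm{fp}})$, so it is itself polynomial in $k$. Hence $t_0\le\tau(k)+\bar g_{\mathrm{comp}}(k)+1=\mathrm{poly}(k)$. Next, substitute this into the logical resource bounds Eq.~\eqref{eq:SMV-DW-general}, which were obtained from Lemma~\ref{lem:SMII-FY} applied to the (D) block $\mathcal C_{D_A,t_0}$. This gives $W_{\log}=O(t_0+|x^\ast|+k)=O(\mathrm{poly}(k)+2k+c_{\mathrm{fp}})$, which is polynomial. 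Similarly, $D_{\log}=O(t_0\log t_0+|x^\ast|+k)$, and since $t_0\log t_0\le t_0^2$, this is also polynomial. The contributions of the payload (P) and the uncomputation (U) are already absorbed into Eq.~\eqref{eq:SMV-DW-general}: (P) adds $O(k)$ reversible bit operations, and (U) at most doubles the cost of (D).

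The main point to watch is that the monotonicity and independence of constants hold uniformly in $A$. In particular, the constants hidden in Eq.~\eqref{eq:SMV-DW-general} and $c_{\mathrm{fp}}$ must depend only on the fixed pipeline and on $\mathcal M_{\mathrm{rev}}$, not on $A$. If that were not the case, "polynomial in $k$" would be ill-defined across predictors. I will state explicitly that all of these constants come from the fixed transformer $\Theta$ and the fixed universal machine, which is exactly what Lemma~\ref{lem:SMV-fp} asserts. With that in place, the corollary is a routine substitution.
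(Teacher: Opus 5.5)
Your proof is correct and follows the paper's route: insert $|x^\ast|\le k+c_{\mathrm{fp}}$ from Lemma~\ref{lem:SMV-fp} into Definition~\ref{def:SMIII-t0}, then feed the resulting polynomial bound on $t_0$ into Eq.~\eqref{eq:SMV-DW-general}. Making $g_{\mathrm{comp}}$ explicitly monotone in its second argument is a worthwhile addition, since the paper's one-line ``substituting'' step tacitly relies on it, and you also spell out the polynomial bounds for $D_{\log}$ and $W_{\log}$, which the paper leaves implicit.
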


\begin{proof}
Definition~\ref{def:SMIII-t0} gives
$t_0={g_{\mathrm{comp}}}(k,|x^\ast|)+\tau(k)+1$, and
Lemma~\ref{lem:SMV-fp} yields $|x^\ast|\le k+c_{\mathrm{fp}}$. Substituting gives
${g_{\mathrm{comp}}}(k,|x^\ast|)\le {\bar g_{\mathrm{comp}}}(k)$, which implies the stated bound.
\end{proof}
This corollary makes explicit that, under polynomial predictor budgets, the
entire diagonal experiment remains within polynomial logical resources.

{
\runin{Fault-tolerant embedding (surface code)}
{We describe a standard fault-tolerant embedding{. Under the}
local-stochastic noise model stated here, {this embedding} is sufficient to
execute $\mathcal C_A$ with arbitrarily small logical failure probability
{when physical error rates are below threshold}. We assume a local-stochastic noise model with physical two-qubit error rate
$p$ satisfying $p<p_{\mathrm{th}}$, where $p_{\mathrm{th}}$ is the
surface-code threshold for the specified layout and decoder. Let $d$ denote
the code distance and $p_L(d)$ the logical failure probability per logical
location.}
}

\begin{lemma}[Logical error per location and a union bound]
\label{lem:SMV-surface}
For $p<p_{\mathrm{th}}$ there exist a constant $c_0=O(1)$ such that
\begin{equation}
\label{eq:SMV-pL}
p_L(d)\le c_0\Bigl(\frac{p}{p_{\mathrm{th}}}\Bigr)^{(d+1)/2}.
\end{equation}
Here $c_0$ and $p_{\mathrm{th}}$ are determined by the decoder, layout, and noise
model.
For a compiled computation with $N_{\mathrm{loc}}$ logical locations,
\begin{equation}
\label{eq:SMV-union}
P_{\mathrm{logical}}\le N_{\mathrm{loc}}\,p_L(d).
\end{equation}
\end{lemma}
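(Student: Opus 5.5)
The plan is to treat the statement as the standard surface-code threshold estimate plus an elementary union bound, and to prove the two displays separately.

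\textbf{Per-location bound.} For Eq.~\eqref{eq:SMV-pL}, fix a logical location (idle, gate, or syndrome-extraction round block) compiled at distance $d$. A logical failure requires a connected chain of faults in the space--time decoding graph whose homology class is nontrivial. Such a chain has length at least $d$, so the decoder errs only if at least $\lceil (d+1)/2\rceil$ faults lie on a path of length $\ell\ge d$.

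\emph{Step 1.} Under local-stochastic noise, a prescribed set of $w$ locations all fail with probability at most $p^{w}$.

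\emph{Step 2.} Count self-avoiding walks of length $\ell$ in the bounded-degree decoding graph. Their number is at most $N_{\rm start}\,\mu^{\ell}$, where $\mu$ is a connectivity constant and $N_{\rm start}=O(d^2)$ counts starting points per location.

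\emph{Step 3.} For each walk, choose which $\lceil \ell/2\rceil$ of its edges are faulty, giving at most $\binom{\ell}{\lceil\ell/2\rceil}\le 2^{\ell}$ choices.

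\emph{Step 4.} Summing over $\ell\ge d$ gives
\[
p_L\le N_{\rm start}\sum_{\ell\ge d}(2\mu)^{\ell}p^{\ell/2}.
\]
Define $p_{\rm th}:=(4\mu^2)^{-1}$, so that $(2\mu)^{\ell}p^{\ell/2}=(p/p_{\rm th})^{\ell/2}$. The series is then geometric and dominated by its first term, $(p/p_{\rm th})^{(d+1)/2}$ up to a factor $1/(1-(p/p_{\rm th})^{1/2})$.

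\emph{Step 5.} Absorb the polynomial prefactor $O(d^2)$ and the geometric-sum factor into $c_0$. Use $p/p_{\rm th}<1$, tightening $p_{\rm th}$ slightly if needed, so that $d^2(p/p_{\rm th})^{\epsilon d}$ is uniformly bounded.

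The constants depend only on decoder, layout, and noise model, which matches the statement. This is essentially the Dennis--Kitaev--Landahl--Preskill argument. Since it is standard, I would cite it and only sketch the counting rather than reprove it.

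\textbf{Union bound.} For Eq.~\eqref{eq:SMV-union}, let $E_j$ be the event that location $j$ fails logically. The compiled computation is correct whenever no $E_j$ occurs. Hence $P_{\rm logical}\le\Pr\bigl[\bigcup_j E_j\bigr]\le\sum_j\Pr[E_j]\le N_{\rm loc}\,p_L(d)$. This needs no independence, which matters because faults in adjacent locations may be correlated through shared syndrome rounds.

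\textbf{Main obstacle.} The hard part is making the per-location bound hold \emph{conditionally on the surrounding circuit}, since locations are not isolated codes. I would handle this by defining the failure of location $j$ via error chains intersecting its space--time region, and applying the counting argument to the global decoding graph restricted to walks through that region. This keeps the $O(d^2)$-type prefactor local and the constant $c_0$ independent of $N_{\rm loc}$.
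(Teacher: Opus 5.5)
Your proposal follows the paper's proof, which likewise invokes the standard surface-code suppression estimate (citing Fowler et al.\ rather than DKLP), absorbs the constants into $c_0$, and then applies the union bound over logical locations; your counting sketch and your remark on defining per-location failure are extra detail the paper omits. One small slip in that sketch: once you replace $p^{\lceil \ell/2\rceil}$ by $p^{\ell/2}$, the leading term is $(p/p_{\rm th})^{d/2}$ rather than $(p/p_{\rm th})^{(d+1)/2}$, so keep the ceiling and take $d$ odd to obtain the stated exponent $(d+1)/2$.
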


\begin{proof}
Equation~\eqref{eq:SMV-pL} expresses the standard exponential suppression of
logical errors with increasing code distance under $p<p_{\mathrm{th}}$ in
phenomenological noise models \cite{Fowler2012i}, with $c_0$ absorbing decoder-
and layout-dependent constants. The bound~\eqref{eq:SMV-union} follows from a
union bound over logical locations.
\end{proof}

\begin{proposition}[Exponential suppression with linear distance]
\label{prop:SMV-exp}
Fix $p<p_{\mathrm{th}}$ and choose $d:=a k+b$ with constants $a>0$ and $b\ge 0$.
Assume $\tau(k)$ is polynomial so that
$N_{\mathrm{loc}}=\mathrm{poly}(k)$ under the conservative estimate
$N_{\mathrm{loc}}=\alpha D_{\log}W_{\log}$.
Then there exist constants $\gamma>0$ and $k_0$ such that for all $k\ge k_0$,
\begin{equation}
\label{eq:SMV-expbound}
P_{\mathrm{logical}}\le 2^{-\gamma k}.
\end{equation}
\end{proposition}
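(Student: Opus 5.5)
We combine Lemma~\ref{lem:SMV-surface} with the polynomial resource bounds of Corollary~\ref{cor:SMV-t0}. Since $\tau(k)$ is polynomial, the corollary and Eq.~\eqref{eq:SMV-DW-general} give $D_{\log},W_{\log}\le C k^{c}$ for some constants $C,c$ and all $k\ge 1$. Hence $N_{\mathrm{loc}}=\alpha D_{\log}W_{\log}\le \alpha C^2 k^{2c}$, and we write $N_{\mathrm{loc}}\le C' k^{r}$. The union bound \eqref{eq:SMV-union} together with \eqref{eq:SMV-pL} and $d=ak+b$ then yields
\begin{equation*}
P_{\mathrm{logical}}\le C' k^{r}\,c_0\Bigl(\frac{p}{p_{\mathrm{th}}}\Bigr)^{(ak+b+1)/2}.
\end{equation*}

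Next we pass to base $2$. Set $\lambda:=\log_2(p_{\mathrm{th}}/p)>0$; this quantity is strictly positive because $p<p_{\mathrm{th}}$. Since $(b+1)/2\ge 0$ and $p/p_{\mathrm{th}}<1$, the factor $(p/p_{\mathrm{th}})^{(b+1)/2}\le 1$ can be dropped. This gives
\begin{equation*}
P_{\mathrm{logical}}\le c_0C'\,2^{\,r\log_2 k-\lambda a k/2}.
\end{equation*}

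We now choose $\gamma:=\lambda a/4>0$. It then suffices that $\log_2(c_0C')+r\log_2 k\le \lambda a k/4$. This holds for all sufficiently large $k$ because $\log k=o(k)$, and we let $k_0$ be the threshold beyond which it holds. For $k\ge k_0$ we obtain $P_{\mathrm{logical}}\le 2^{-\gamma k}$.

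No step is a serious obstacle. The only care needed is to make the polynomial bound on $N_{\mathrm{loc}}$ explicit from Corollary~\ref{cor:SMV-t0}, including the $t_0\log t_0$ factor, which is still polynomial, and to note that the constants $\gamma$ and $k_0$ depend only on $p/p_{\mathrm{th}}$, $a$, $c_0$, $\alpha$ and the polynomial $\tau$, not on $A$ itself.
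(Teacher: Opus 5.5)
Your proposal is correct and follows the same route as the paper: combine the union bound \eqref{eq:SMV-union} with the per-location suppression \eqref{eq:SMV-pL} at $d=ak+b$, so that $(p/p_{\mathrm{th}})^{(d+1)/2}$ decays as $2^{-\Theta(k)}$ and dominates the polynomial growth of $N_{\mathrm{loc}}$. The only difference is that you make the constants explicit, namely $\gamma=a\log_2(p_{\mathrm{th}}/p)/4$ and a threshold $k_0$ obtained from $\log k=o(k)$, whereas the paper argues asymptotically.
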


\begin{proof}
Combining Eq.~\eqref{eq:SMV-union} with Eq.~\eqref{eq:SMV-pL}, the factor
$(p/p_{\mathrm{th}})^{(d+1)/2}$ scales as $2^{-\Theta(k)}$ when $d=\Theta(k)$.
Since $N_{\mathrm{loc}}$ grows only polynomially in $k$, the exponential
suppression dominates for sufficiently large $k$, yielding the stated bound.
\end{proof}
This proposition shows that scalable reliability is achieved with only linear
growth of code distance in the predictor length.

\runin{Physical resources}
A standard surface-code compilation yields the physical scaling
$N_{\mathrm{phys}}=\Theta(W_{\log}d^2)$ physical qubits and
$T_{\mathrm{phys}}=\Theta(D_{\log}d)$ code cycles for patch-based layouts and
lattice-surgery style primitives \cite{Fowler2012i} . These scalings complete the
resource accounting for the fault-tolerant implementation of $\mathcal C_A$.

\bibliographystyle{apsrev4-2}
\bibliography{name}

\end{document}